\documentclass[preprint,amsmath,amssymb,aip,jcp,nobibnotes,longbibliography,floatfix]{revtex4-1}  

\usepackage{inputenc}
\usepackage{amsmath}
\usepackage{amssymb, amsfonts}
\usepackage{mathtools}
\usepackage{graphicx,epstopdf,bm}
\usepackage{tabularx}
\usepackage[colorlinks,bookmarks,urlcolor=blue,citecolor=blue,linkcolor=blue]{hyperref}
\usepackage[T1]{fontenc}
\usepackage{bbm}
\usepackage{mathtools}
\usepackage{cases}
\usepackage[shortlabels]{enumitem}
\usepackage{algorithm}
\usepackage{algpseudocode}


\usepackage{bbm}

\renewcommand{\vec}[1]{\boldsymbol{#1}}
\newcommand{\vecs}[1]{\boldsymbol{#1}}
\newcommand{\paren}[1]{\left(#1\right)}
\newcommand{\brac}[1]{\left[#1\right]}
\newcommand{\E}{\mathbb{E}}
\newcommand{\avg}[1]{\E[#1]}

\newcommand{\D}[2]{\frac{d#1}{d#2}}
\newcommand{\PD}[2]{\frac{\partial#1}{\partial#2}}

\newcommand{\abs}[1]{\left|#1\right|}
\newcommand{\norm}[1]{\Vert#1\Vert}

\DeclareMathOperator{\prob}{Pr}

\newcommand{\DA}{D^{\textrm{A}}}
\newcommand{\DB}{D^{\textrm{B}}}
\newcommand{\DC}{D^{\textrm{C}}}


\newcommand{\veta}{\vecs{\eta}}

\newcommand{\vx}{\vec{x}}
\newcommand{\vy}{\vec{y}}
\newcommand{\vz}{\vec{z}}
\newcommand{\vX}{\vec{X}}

\newcommand{\vv}{\vec{v}}

\newcommand{\vw}{\vec{w}}

\newcommand{\vO}{\vec{0}}

\newcommand{\rb}{r_{\textrm{b}}}

\newcommand{\ind}{\mathbbm{1}}


\def\R{\mathbb{R}}

\newcommand{\pb}{p_{\textrm{b}}}
\newcommand{\pbO}{p_{\textrm{b},0}}

\newcommand{\equil}[1]{\bar{#1}}
\newcommand{\peq}{\equil{p}}
\newcommand{\pbeq}{\equil{p}_{\textrm{b}}}

\renewcommand{\L}{\mathcal{L}}
\newcommand{\Lb}{\mathcal{L}_{\textrm{b}}}

\newcommand{\Rset}{\mathcal{R}}
\newcommand{\dRset}{\partial \mathcal{R}}

\newcommand{\Dmat}{\mathcal{D}}
\newcommand{\Dbmat}{\mathcal{D}_{\textrm{b}}}
\newcommand{\Kd}{K_{\textrm{d}}}
\newcommand{\Id}{\mathcal{I}_d}

\newcommand{\dB}{\partial B}

\newcommand{\OSqD}{\Omega^{2}_{\text{free}}}

\renewcommand{\epsilon}{\varepsilon}
\renewcommand{\rb}{\epsilon}


\newcommand{\kp}{\kappa^{+}}
\newcommand{\km}{\kappa^{-}}

\newcommand{\mup}{\beta^{+}}
\newcommand{\mum}{\beta^{-}}

\newcommand{\kphat}{\hat{\kappa}^{+}}

\newcommand{\Omegaf}{\Omega_{\textrm{free}}}

\newcommand{\vR}{\vec{r}}




\usepackage{amsthm}
\theoremstyle{plain}
\newtheorem{theorem}{Theorem}[section]

\begin{document}

\title{Detailed Balance for Particle Models of Reversible Reactions in Bounded Domains}

\author{Ying Zhang}
\email{yingzhang@brandeis.edu}
\affiliation{Department of Mathematics, Brandeis University}

\author{Samuel A. Isaacson}
\email{isaacson@math.bu.edu}
\affiliation{Department of Mathematics and Statistics, Boston University}

\numberwithin{equation}{section}
 
\begin{abstract}
In particle-based stochastic reaction-diffusion models, 
reaction rate and placement kernels are used to decide the probability per time a reaction can occur between reactant particles, and to decide where product particles should be placed. When choosing kernels to use in reversible reactions, a key constraint is to ensure that detailed balance of spatial reaction-fluxes holds at all points at equilibrium. In this work we formulate a general partial-integral differential equation model that encompasses several of the commonly used contact reactivity (e.g. Smoluchowski-Collins-Kimball) and volume reactivity (e.g. Doi) particle models. From these equations we derive a detailed balance condition for the reversible $\textrm{A} + \textrm{B} \leftrightarrows \textrm{C}$ reaction. In bounded domains with no-flux boundary conditions, when choosing unbinding kernels consistent with several commonly used binding kernels, we show that preserving detailed balance of spatial reaction-fluxes at all points requires spatially varying unbinding rate functions near the domain boundary. Brownian Dynamics simulation algorithms can realize such varying rates through ignoring domain boundaries during unbinding and rejecting unbinding events that result in product particles being placed outside the domain. 
\end{abstract} 
  
\maketitle 

\section{Introduction}
Particle-based stochastic reaction-diffusion (PBSRD) models are a common approach for studying biochemical systems where stochasticity in both the diffusive motion of particles and reactive interactions between particles are important.  They have been used in studying a variety of spatially-distributed cellular and biological systems. Examples include how molecular reach can control the efficacy of T-cell activation within the cell membrane~\cite{Zhang2019}, how noise can influence the response of spatially-distributed signaling pathways~\cite{WoldeEgfrdPNAS2010}, and how the dynamics and formation of protein clusters are tuned to balance cluster size and protein mobility~\cite{Ullrich2015cy}.

In PBSRD models, the state of a chemical system is given by the collective chemical states and positions of all particles. PBSRD models represent an intermediate physical scale between computationally expensive microscopic all-atom molecular dynamics models~\cite{ShawAntonMS2009}, and macroscopic mean-field chemical kinetics models in which biochemical systems are described through a system of reaction-diffusion partial differential equations (PDEs) for the spatially-dependent concentrations of chemical species. 

In studying spatially-distributed biological and chemical reaction processes, there are several classes of PBSRD models that have been used in applications. In this work we focus on models that treat molecules as point-particles moving by Brownian motion, but note that these models can be generalized to systems where particles have physical sizes~\cite{KleinSchwarz2014,SmoldynVolEx2017} and/or move by drift-diffusion~\cite{FrohnerNoe2018}. We consider two main classes of PBSRD models, distinguished in how they model bimolecular reactions such as $\textrm{A}+\textrm{B}\to\textrm{C}$. The first class are contact-reactivity (CR) models, where two molecules of species $\textrm{A}$ and $\textrm{B}$ may react upon reaching a fixed separation. This includes the popular Smoluchowski-Collins-Kimball (SCK) model, in which a pair of reactant particles have a probability of successfully reacting, or reflecting apart, once they reach some specified reaction-radius, $\rb$~\cite{CollinsKimballPartialAdsorp}. Note that while the SCK model can account for molecular sizes in bimolecular reaction processes via appropriately chosen reaction-radii, in many applications molecule densities are assumed sufficiently dilute that molecules are otherwise treated as point particles, offering improved computational performance~\cite{AndrewsBrayPhysBio2004,SmoldynVolEx2017}. It is this latter form of the SCK model we consider in this work.

The second general class of PBSRD models are volume reactivity (VR) models, in which $\textrm{A}$ and $\textrm{B}$ particles react with probabilities per time based on their current positions. Perhaps the most common VR model is the Doi~\cite{DoiSecondQuantA,DoiSecondQuantB,PrustelMeierS2014} or $\lambda$-$\rho$~\cite{ErbanChapman2009} model (which Doi attributes to~\cite{TeramotoDoiModel1967}), in which two reactants react with a fixed probability per time when within a reaction-radius of each other. As we show in the next section, in both the CR and VR models these reaction choices can be encoded through a specified reaction kernel, which determines the probability per time individual pairs of reactants may react based on their positions, and the probability density reaction products are placed at given positions.

Both the CR and VR models are the basis for a number of widely-used particle-based simulation packages. These include Brownian Dynamics (BD) simulators such Smoldyn~\cite{AndrewsBrayPhysBio2004}, which was originally designed to support the Smoluchowski model but now supports volume exclusion~\cite{SmoldynVolEx2017} and several different PBSRD models; and ReaDDy, a Doi-model based simulator~\cite{Schoneberg2013ek,NoeReaddy22019}. A variety of timestep-based BD type methods \cite{AndrewsBrayPhysBio2004,Schoneberg2013ek,Donevetal2018,MorellitenWold2008}, spatially-discrete continuous-time jump process methods~\cite{IsaacsonZhang17,IsaacsonCRDME2013}, exact propagation methods~\cite{WoldeEgfrdPNAS2010,DonevJCP2010}, and efficient coarse-grained and multiscale simulation methods~\cite{FranzEtAl2013,FleggEtAl2015,HarrisonYates2016,SmithYates2021,BlackwellKoh2011,BlackwellKoh2012,PrustelMeierS2021} have been proposed for simulating various CR and VR PBSRD models. We note that while we later discuss reaction product placement models near boundaries based on models developed for Smoldyn, in its default mode Smoldyn simulates Smoluchowski dynamics~\cite{SmoluchowskiDiffLimRx}, which is not a special case of either the VR or CR models (but does arise as a limit of both~\cite{AgbanusiIsaacsonDoi,KeizerJPhysChem82}). Smoldyn also chooses reaction parameters not as independent model features, but in a timestep dependent manner~\cite{AndrewsBrayPhysBio2004}.

In this work we investigate a basic equilibrium property of the CR and VR PBSRD models; whether they preserve detailed balance of (pointwise) spatial reaction fluxes at equilibrium for reversible reactions. We study the  $\textrm{A}+\textrm{B} \leftrightarrows \textrm{C}$ reaction, in the simplified case of a system with just one particle of species $\textrm{A}$ at $\vx$ and one particle of species $\textrm{B}$ at $\vy$, or one particle of species $\textrm{C}$ at $\vz$. For this system, detailed balance of (pointwise) spatial reaction fluxes is the statement that at equilibrium the following are equal
\begin{enumerate}
  \item The probability density per time the system is in the unbound state and the $\textrm{A}$ particle at $\vx$ reacts with the $\textrm{B}$ particle at $\vy$ to produce a $\textrm{C}$ particle at $\vz$.
  \item The probability density per time the system is in the bound state and the $\textrm{C}$ particle at $\vz$ dissociates into an $\textrm{A}$ particle at $\vx$ and a $\textrm{B}$ particle at $\vy$.
\end{enumerate}
\phantom{word}

For reversible reactions, microscopic reversibility of quantum mechanical systems can, via systematic approximations, be argued to result in detailed balance of reaction fluxes at equilibrium for macroscopic well-mixed reaction systems~\cite{MolecRxDynHenriksen,VanKampenDBI57}. Though we are not aware of any rigorous derivations, we similarly expect that microscopic reversibility also implies detailed balance of forward and backward (pointwise) spatial reaction fluxes in PBSRD models at equilibrium. From a statistical mechanical perspective, it has been postulated that reversible chemical reactions should not alter the state of thermodynamic equilibrium, so that diffusing particles are well-mixed at equilibrium and detailed balance of (pointwise) spatial reaction fluxes holds~\cite{Donevetal2018}.

Preserving detailed balance of physical processes has been shown to be important in modeling transport processes within cells~\cite{ElstonPeskinJTB2003}, and choices of reaction kernels that violate detailed balance have been shown to cause convergence to non-equilibrium steady states for closed particle systems~\cite{FrohnerNoe2018}. In addition, preserving detailed balance of (pointwise) spatial reversible reaction fluxes has also been a key design consideration in several recent numerical methods and simulation packages~\cite{MorellitenWold2008,FrohnerNoe2018,Donevetal2018,IsaacsonZhang17}. Other methods may not rigorously preserve the detailed balance of (pointwise) spatial reaction-fluxes, but have been designed to still accurately capture equilibrium properties such as equilibrium and dissociation constants at the population level~\cite{AndrewsBrayPhysBio2004,AndrewsRebinding2005}. Note, in most of these methods detailed balance was only presented for PBSRD models in periodic or unbounded domains~\cite{MorellitenWold2008,FrohnerNoe2018,Donevetal2018}. 

In many contexts, for example modeling cellular processes, PBSRD models are used in closed and bounded domains with reflecting boundary conditions, where detailed balance of (pointwise) spatial reaction fluxes would also be expected to hold at equilibrium. This raises the question of whether previously proposed reaction kernels ensure detailed balance in such domains, or whether modifications are needed to account for more general geometries and/or reflecting boundaries. In this work, we derive and explore a pointwise detailed balance condition for the CR and VR PBSRD models in bounded domains with no-flux reflecting boundary conditions. 

Unless stated otherwise, in the remainder ``detailed balance'' will refer to the microscopic balance of (pointwise) spatial reaction fluxes at equilibrium. Similarly, ``reaction rates'' will refer to the microscopic PBSRD parameters determining the probability per time a reaction can occur given one or more particles in an appropriate configuration to possibly react.

We begin in the next section by formulating a generalized equation that can be used to represent either the CR or VR model for a pair of $\textrm{A}$ and $\textrm{B}$ molecules undergoing the reversible $\textrm{A} + \textrm{B} \leftrightarrows \textrm{C}$ reaction. We then explain in Section~\ref{sect:SSDBandEqSoln} how requiring detailed balance to hold at equilibrium determines the general solution to this model, and requires that the association and dissociation kernels are proportional. In Section~\ref{sect:DB_for_common_models} we consider several popular choices for association and dissociation kernels in the VR and CR models, investigating whether they allow detailed balance to hold in bounded domains with reflecting boundary conditions. Next, in Section~\ref{sect:dbreject} we demonstrate a trade-off arises between preserving detailed balance when using standard association kernels in the CR and VR models, versus allowing for a spatially uniform dissociation rate for the dissociation reaction. We show that preserving detailed balance with these association kernels requires a (decreased) spatially varying dissociation rate near domain boundaries. Such a mechanism can be realized in simulation methods by either using the spatially-varying dissociation rate near the domain boundary, or by using a constant rate, ignoring the domain boundary, and rejecting any unbinding events in which molecules are placed outside the domain. In Section~\ref{sect:kpfromkm} we reverse our approach, specifying an unbinding kernel for the Doi VR model that includes a spatially uniform dissociation rate, and demonstrating that the corresponding detailed balance preserving association kernel then involves a locally increased probability per time for reactants to react when near the boundary. Finally, in Section~\ref{sect:numerics} we demonstrate a simple numerical example illustrating differences that arise when using reaction product placement kernels that preserve detailed balance and do not preserve detailed balance in the vicinity of the domain boundary.

\section{PBSRD Models for Reversible Binding}
\label{sect:revBindNonStatTarget}
We consider the reversible $\textrm{A} + \textrm{B} \rightleftharpoons \textrm{C}$ reaction in a system with one $\textrm{A}$ molecule and one \textrm{B} molecule (or equivalently one \textrm{C} molecule).  Let $\vx$ denote the position of the $\textrm{A}$ molecule, $\vy$ the position of the $\textrm{B}$ molecule, and $\vz$ the position of the $\textrm{C}$ molecule. We assume the molecules diffuse within a bounded domain $\Omega \subset \R^d$ (with $d = 2$ or $d=3$). Let $p(\vx,\vy,t)$ denote the probability density the \textrm{A} and \textrm{B} molecules are unbound and located at $\vx$ and $\vy$ respectively at time $t$, and $\pb(\vz,t)$ the probability density the molecules are bound and the corresponding \textrm{C} molecule is located at $\vz$ at time $t$.

We denote by $\DA$, $\DB$ and $\DC$ the constant (positive) diffusivities of the \textrm{A}, \textrm{B} and \textrm{C} molecules respectively. With $\Id$ the $d$-dimensional identity matrix, we define two constant diffusivity matrices, given by the block matrices 
\begin{equation}
  \label{eq:diffusivityDefs}
  \begin{aligned}
    \Dmat &:= \begin{bmatrix}
      \DA  \Id & \vO \\
      \vO & \DB \Id    
    \end{bmatrix} \\
    \Dbmat &:= \DC \Id.
  \end{aligned}
\end{equation}
\begin{figure*}[tb]
  \includegraphics[width=.65\textwidth]{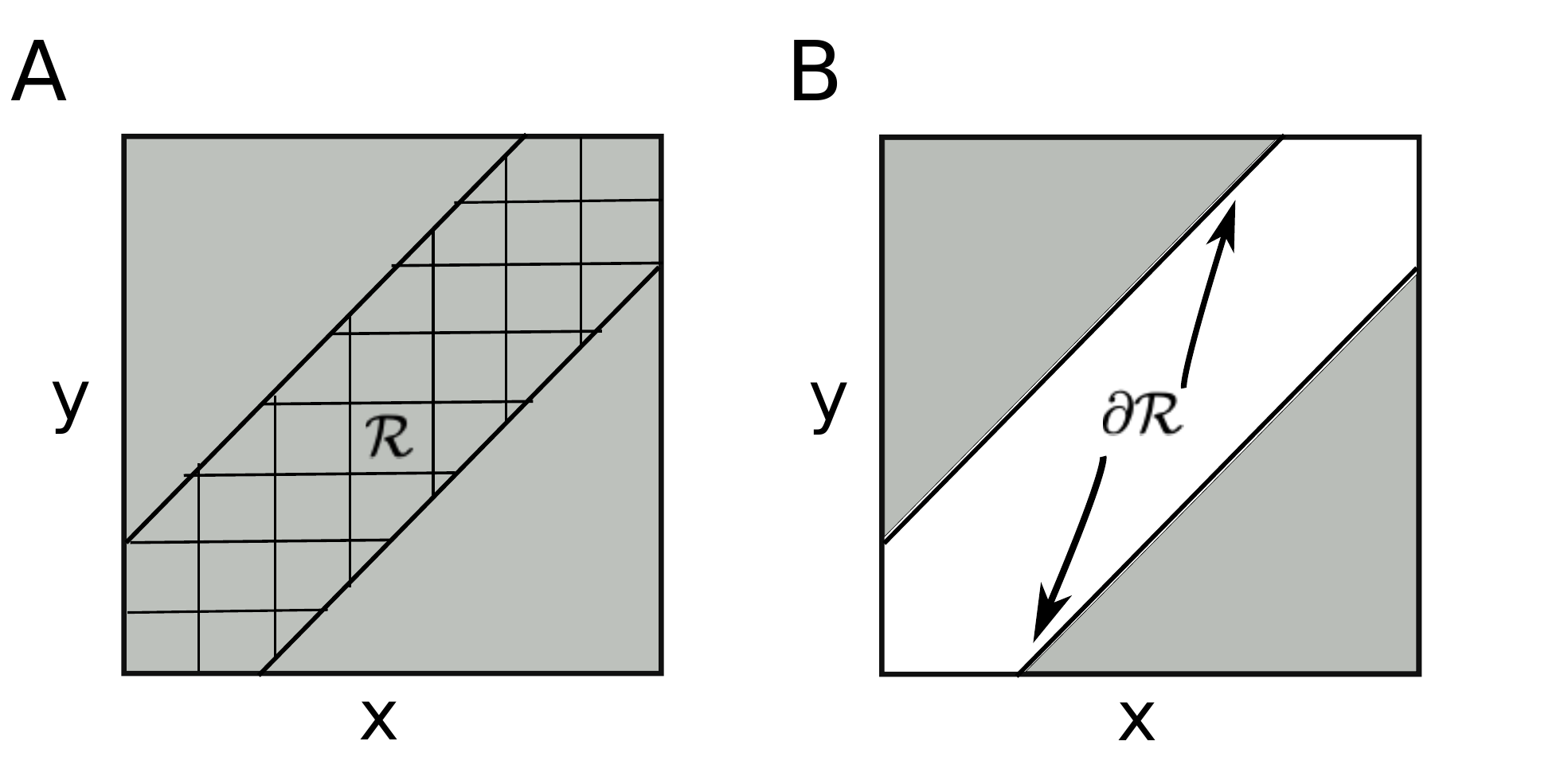}
  \caption{Reactive regions in the volume reactivity (VR) and contact reactivity (CR) PBSRD models when $\Omega$ is a one-dimensional interval. Here we plot the two-particle $(\vx,\vy)$ phase-space within the given interval. $\Omega^2$ corresponds to the entirety of the squares in both figures, while the shaded regions correspond to $\Omegaf^2$, the accessible region in which particles can diffuse. A) The VR model, where particles can react when within $\Rset$, the cross-hatched region. For some variants of the VR model this region can fill the entire square (e.g. Gaussian kernels~\eqref{eq:lambdaMultParticleGauss}), while for others (e.g. Doi kernels~\eqref{eq:lambdaMultParticleVR}) it may denote a subset of the square when $\vx$ and $\vy$ are sufficiently close. B) The CR model, where particles are excluded from the unshaded region, and can react or reflect apart when reaching the boundary (lines) $\dRset$ (e.g. the SCK model~\eqref{eq:lambdaMultParticleCR}).}
  \label{fig:Fig1}
\end{figure*}
With $\rb$ again labeling the reaction-radius, let $\Rset = \{ (\vx,\vy) \in \Omega^{2} \,\vert\, \abs{\vx-\vy} < \rb\}$, and denote by $\dRset = \{ (\vx,\vy) \in \Omega^{2} \,\vert\, \abs{\vx-\vy} = \rb\}$ the boundary of $\Rset$. In $\R^{2d}$, $\dRset$ corresponds to the set of $(\vx,\vy)$ pairs at which the association reaction can occur in the CR model. In the VR model the region in which a reaction can occur depends on the choice of rate functions. For example, the Doi model~\cite{DoiSecondQuantA,DoiSecondQuantB} only allows reactions for $(\vx,\vy)\in\Rset$, while a Gaussian interaction function model~\cite{Zhang2019} allows for reactions at all $(\vx,\vy) \in \Omega^2$. We denote the set of realizable reactant particle pair positions $(\vx,\vy) \in \Omega^2$ by
\begin{subequations}
  \begin{numcases}
    {\OSqD =}
    \Omega^2, &\text{VR model,}\label{eq:DoiDiffDomain}\\
    \Omega^2\setminus\Rset\cup\dRset, &\text{CR model.}\label{eq:SmolDiffDomain}
  \end{numcases}
\end{subequations}
The choice \eqref{eq:DoiDiffDomain} is used in the VR model, for which the Doi and $\lambda$-$\rho$ models are special cases~\cite{TeramotoDoiModel1967,DoiSecondQuantA,DoiSecondQuantB} where molecules react with a constant probability per time when positioned in $\mathcal{R}$. The choice~\eqref{eq:SmolDiffDomain} corresponds to the CR model, for which the Smoluchowski-Collins-Kimball (SCK) partial-absorption model is a special case~\cite{CollinsKimballPartialAdsorp,KeizerJPhysChem82}. In the SCK model two reactants either instantly react or are reflected apart upon reaching the boundary of the reactive region, $\dRset$. The preceding geometric regions are illustrated when $\Omega$ is an interval in Fig.~\ref{fig:Fig1}.

We will make use of an indicator function to denote the positions of realizable reactant pairs
\begin{equation} 
\label{eq:chiDef}
  \ind_{\OSqD}(\vx,\vy) = \begin{cases}
  1, &(\vx,\vy) \in \OSqD,\\
  0, &(\vx,\vy) \notin \OSqD.
  \end{cases}
\end{equation}
Finally, with $\nabla_{\vx,\vy} = (\nabla_{\vx},\nabla_{\vy})$ we then have the effective diffusion operators in $(\vx,\vy)$ and $\vz$:
\begin{equation}
  \label{eq:diffusionOps}
  \begin{aligned}
    \L &= \nabla_{\vx,\vy} \cdot \ind_{\OSqD}(\vx,\vy) \Dmat \nabla_{\vx,\vy}\\
    \Lb &= \nabla_{\vz} \cdot \Dbmat \nabla_{\vz}.
  \end{aligned}
\end{equation}

The forward association $\textrm{A} + \textrm{B} \to \textrm{C}$ reaction process is defined by a reaction kernel $\kp(\vz\vert\vx,\vy)$, corresponding to the probability density per unit time a reaction occurs creating a \textrm{C} molecule at $\vz$ given an \textrm{A} molecule at $\vx$ and a \textrm{B} molecule at $\vy$. We assume that $\kp(\vz\vert\vx,\vy)$ is specified through the factorization
\begin{align} \label{eq:kpfactorization}
  \kp(\vz\vert\vx,\vy) &= \mup(\vx,\vy) \kphat(\vz | \vx,\vy),
\end{align}
where $\mup(\vx,\vy)$ denotes the probability per unit time an \textrm{A} molecule at $\vx$ and a \textrm{B} molecule at $\vy$ attempt to react. $\kphat(\vz|\vx,\vy)$ denotes the probability density a reaction successfully occurs and creates a $\textrm{C}$ molecule at $\vz$, given that an $\textrm{A}$ molecule at $\vx$ and a $\textrm{B}$ molecule at $\vy$ attempted to react. Common choices for $\mup(\vx,\vy)$ are
\begin{align}
  &\begin{aligned} \label{eq:lambdaMultParticleVR}
  \mup(\vx,\vy) &= \lambda \ind_{\Rset}(\vx,\vy) \\
  & = \lambda \ind_{B_{\rb}(\vO)}(\vx-\vy),  
  \end{aligned}
  &\begin{aligned}
    \text{(Doi VR model)}
  \end{aligned}\\
  &\begin{aligned} \label{eq:lambdaMultParticleCR}
  \mup(\vx,\vy) &= \alpha \delta_{\dRset}(\vx,\vy) \\
  &= \sqrt{2} \alpha \delta_{\dB_{\rb}(\vO)}(\vx-\vy),  
  \end{aligned}
  &\begin{aligned}
    \text{(SCK CR Model)} 
  \end{aligned}
\end{align}
Here $B_{\rb}(\vO) = \{\abs{\vx} < \rb\}$ denotes the $d$-dimensional ball of radius $\rb$ about the origin, and $\dB_{\rb}(\vO) = \{\abs{\vx}=\rb\}$ the corresponding surface of the ball. In the Doi VR model, $\lambda$ corresponds to the probability per unit time the molecules may react when within $\rb$ of each other. In the SCK CR model, $\alpha$ corresponds to the absorption constant for the molecules to either react or reflect upon reaching a separation of $\rb$ (with units of length per time). The equivalence of the two $\delta$-surface measure representations given in~\eqref{eq:lambdaMultParticleCR} is shown in Appendix~\ref{app:proofOfEqCR}.

While~\eqref{eq:lambdaMultParticleVR}, which is discontinuous, is the most popular VR model, smooth interaction functions also arise in applications. For example, in modeling bimolecular reactions between membrane-bounded tethered signaling molecules with unstructured tails we derived and used the Gaussian interaction \cite{Goyette2017,Zhang2019}
\begin{equation}
  \mup(\vx,\vy) = \lambda\paren{\frac{3}{2\pi}}^{3/2}\frac{1}{\epsilon^3}e^{-\frac{3\abs{\vx-\vy}^2}{2\epsilon^2}}.
  \label{eq:lambdaMultParticleGauss}
\end{equation}

In both the VR and CR models, it is common to choose the placement kernel $\kphat(\vz|\vx,\vy)$ such that a newly created \textrm{C} molecule is placed on the line connecting the \textrm{A} and \textrm{B} molecules,
\begin{align} \label{eq:gammaStd}
  \kphat(\vz\vert\vx,\vy) &= \delta(\vz - \gamma \vx - (1-\gamma)\vy),
\end{align}
where $\gamma$ is a fixed value in  $\brac{0,1}$. One simple choice is $\gamma = \frac{1}{2}$, which corresponds to the midpoint between the two molecules. Another common choice, when $\DA$ and $\DB$ are constants, is to use the diffusion weighted center of mass~\cite{AndrewsBrayPhysBio2004},
\begin{equation}
  \gamma = \frac{\DB}{\DA + \DB}.
  \label{eq:plcmWeight}
\end{equation}
We note that with the choices \eqref{eq:gammaStd} and \eqref{eq:plcmWeight}, $\gamma = 0$ indicates that the $\textrm{B}$ molecule is not diffusing. Upon binding, the $\textrm{C}$ molecule is therefore placed at $\vy$. On the other hand, $\gamma = 1$ indicates that the $\textrm{A}$ molecule is not diffusing, and the $\textrm{C}$ molecule is then placed at $\vx$. Such choices would be appropriate if one of the $\textrm{A}$ or $\textrm{B}$ molecules represents a stationary or significantly more massive target. 

With the factorization~\eqref{eq:kpfactorization}, the probability that an attempted reaction between an \textrm{A} molecule at $\vx$ and a \textrm{B} molecule at $\vy$ successfully creates a \textrm{C} molecule (within the domain) is given by 
\begin{equation*}
  \int_{\Omega} \kphat(\vz\vert\vx,\vy) \, d\vz.
\end{equation*}
As such, the probability per time an $\textrm{A}$ molecule at $\vx$ and a $\textrm{B}$ molecule at $\vy$ successfully react to produce a $\textrm{C}$ molecule within the domain is
\begin{align*}  
  \kp(\vx,\vy) &:= \int_{\Omega} \kp(\vz\vert\vx,\vy) \, d\vz \\
  &\phantom{:}= \mup(\vx,\vy) \int_{\Omega} \kphat(\vz\vert\vx,\vy) \, d\vz.
\end{align*}
In freespace and periodic domains one usually has $  \int_{\Omega} \kphat(\vz\vert\vx,\vy) \, d\vz = 1$, i.e. the reaction always occurs successfully, so that $\kp(\vx,\vy) = \mup(\vx,\vy)$. As we discuss in Section~\ref{sect:application_DB}, this does not generally hold for standard association or dissociation reaction kernels in bounded domains with reflecting boundary conditions.

To completely specify the reaction-diffusion model, we must also give the unbinding kernel for the reverse dissociation $\textrm{C} \to \textrm{A} + \textrm{B}$ reaction. Let $\km(\vx,\vy\vert\vz)$ denote the probability density per time a reaction occurs producing an \textrm{A} molecule at $\vx$ and a \textrm{B} molecule at $\vy$ given a \textrm{C} molecule at $\vz$. Without loss of generality, assume $\gamma \neq 1$. As we will later show, detailed balance preserving choices for $\km(\vx,\vy|\vz)$ given the association reaction rate functions of the Doi~\eqref{eq:lambdaMultParticleVR} or SCK~\eqref{eq:lambdaMultParticleCR} models with placement density~\eqref{eq:gammaStd} are
\begin{align}  
 &\begin{aligned}\label{eq:mTwoPartDefVR}
   \km(\vx,\vy|\vz) = &\tfrac{\mum}{\abs{B_{(1-\gamma)\rb}(\vO)}} \ind_{B_{(1-\gamma)\epsilon}(\vO)}(\vx-\vz) \delta\paren{\vy - \frac{\vz-\gamma\vx}{1-\gamma}}, 
 \end{aligned}
 &\begin{aligned}
   \text{(Doi VR)}     
 \end{aligned}\\
 &\begin{aligned} \label{eq:mTwoPartDefCR}
   \km(\vx,\vy|\vz) = &\tfrac{\mum}{\abs{\partial B_{(1-\gamma)\rb}(\vO)}}\delta_{\partial B_{(1-\gamma)\epsilon}(\vO)}(\vx-\vz)
   \delta\paren{\vy - \frac{\vz-\gamma\vx}{1-\gamma}},
 \end{aligned}
 &\begin{aligned}
   \text{(SCK CR)}
 \end{aligned}
\end{align}
assuming $\mum$ is chosen appropriately. Here $B_{(1-\gamma)\rb}(\vO)$ denotes the ball of radius $(1-\gamma)\rb$ centered at the origin, $\partial B_{(1-\gamma)\rb}(\vO)$ the sphere of radius $(1-\gamma)\rb$ centered at the origin (i.e. boundary of the ball), and $\abs{B_{(1-\gamma)\rb}(\vO)}$ and $ \abs{\partial B_{(1-\gamma)\rb}(\vO)}$ their respective sizes (e.g. volume and surface area respectively when in three-dimensions). $\mum$ represents the (constant) probability per time a $\textrm{C}$ molecule attempts to dissociate into \textrm{A} and \textrm{B} molecules. As written, the Doi unbinding kernel corresponds to sampling the position of the \textrm{A} molecule within a ball of radius $(1-\gamma)\rb$ about the \textrm{C} molecule, and then placing the \textrm{B} molecule  by reflection on the line connecting the \textrm{A} and \textrm{C} molecules. The SCK CR model modifies this process by sampling the position of the \textrm{A} molecule on the surface of the sphere. Note, for the Doi VR (SCK CR) model one could equivalently sample the position of the \textrm{B} molecule within (on) a ball of radius $\gamma\rb$, and then place the \textrm{A} molecule on the line connecting the \textrm{B} and \textrm{C} molecules. 

Finally, with $\km(\vx,\vy|\vz)$ specified we define $\km(\vz)$ to be the probability per time that a \textrm{C} molecule successfully dissociates at $\vz$, producing \textrm{A} and \textrm{B} molecules within $\Omega$. We have
\begin{equation*}
  \km(\vz) = \int_{\Omega^2} \km(\vx,\vy\vert\vz) \, d\vx \, d\vy.
\end{equation*}
In freespace and periodic domains one usually has $\km(\vz) = \mum$, i.e. the reaction always occurs successfully with fixed rate constant, $\mum$~\cite{FrohnerNoe2018,Donevetal2018}. 

Given the preceding definitions, our general model for the two-particle reversible $\textrm{A} + \textrm{B} \rightleftharpoons \textrm{C}$ reaction with $(\vx,\vy) \in \Omega^{2}$ and $\vz \in \Omega$ is
\begin{subequations} \label{eq:bothDiffuseEqs} 
 \begin{align}
   \begin{aligned} \label{eq:bothDiffuseEqsrho}
   \ind_{\OSqD}(\vx,\vy) \PD{p}{t}(\vx,\vy,t) &= \L p(\vx,\vy,t) - \kp(\vx,\vy) p(\vx,\vy,t) 
   + \int_{\Omega} \km(\vx,\vy\vert\vz) \pb (\vz,t) \, d \vz, 
   \end{aligned}\\
   \begin{aligned}\label{eq:bothDiffuseEqsrhob}
   \PD{\pb}{t}(\vz,t) &= \Lb \pb(\vz,t) - \km(\vz) \pb(\vz,t) 
   + \int_{\Omega^{2}} \kp(\vz\vert\vx,\vy) p(\vx,\vy,t) \, d\vx \, d \vy. 
 \end{aligned}
 \end{align}
\end{subequations}
Here we assume a reflecting zero Neumann boundary condition on $\partial \Omega$ in each coordinate respectively (i.e. $\vx$, $\vy$ and $\vz$), 
\begin{align*}
  \nabla_{\vx,\vy} p(\vx,\vy,t) \cdot \veta(\vx,\vy) &= 0, &&(\vx,\vy) \in \partial(\Omegaf^2),\\
   \nabla_{\vz} \pb(\vz,t) \cdot \veta_{\textrm{b}}(\vz) &= 0, &&\vz \in \partial \Omega,
\end{align*}
where $\partial(\Omegaf^2)$ denotes the two-particle phase-space boundary, $\veta(\vx,\vy)$ denotes the unit outward normal to this boundary at $(\vx,\vy)$, and $\veta_{\textrm{b}}(\vz)$ denotes the unit outward normal to $\partial\Omega$ at $\vz$. For the VR model $\partial(\Omegaf^2) = \partial \Omega \times \partial \Omega$, i.e. the phase-space boundary corresponding to each particle reflecting off the domain boundary $\partial \Omega$. In Fig.~\ref{fig:Fig1} this corresponds to the entire square bounding the domain. In the CR model $\partial (\Omegaf^2)$ corresponds to the portion of $\partial \Omega \times \partial \Omega$ that is outside $\mathcal{R} \cup \partial \mathcal{R}$. In Fig.~\ref{fig:Fig1} this corresponds to the portion of the square bounding the domain that borders the shaded region that represents $\Omegaf^2$. Finally, we assume the initial conditions
\begin{align*}
  p(\vx,\vy,0) &= p_0(\vx,\vy), & \pb(\vz,0) = \pbO(\vz),
\end{align*}
where $p_0(\vx,\vy)$ is zero outside $\OSqD$. We also assume that $p_0$ and $\pbO$ define a proper probability  distribution so that 
\begin{equation*}
  \int_{\OSqD} p_0(\vx,\vy) \, d\vx \, d\vy + \int_{\Omega} \pbO(\vz) \, d\vz = 1.
\end{equation*}
Integrating~\eqref{eq:bothDiffuseEqs} over $(\vx,\vy) \in \Omega^{2}$ and $\vz \in \Omega$, and using the definitions of $\km(\vz)$ and $\kp(\vx,\vy)$, this normalization of the initial conditions immediately implies that probability is conserved for all times:
\begin{equation*}
  \int_{\OSqD} p(\vx,\vy,t) \, d\vx \, d \vy
  + \int_{\Omega} \pb(\vz,t) \, d\vz = 1.
\end{equation*}

We note that \eqref{eq:bothDiffuseEqs} encompasses both the general VR and CR models. In particular, in Appendix~\ref{sect:WeakFormContactRx} we show the weak form~\cite{EvansPDEs2ed,SchussStochProcBook2010} of \eqref{eq:bothDiffuseEqs} with SCK rate kernels \eqref{eq:lambdaMultParticleCR}, \eqref{eq:gammaStd} and \eqref{eq:mTwoPartDefCR} is equivalent to the weak form of the standard representation for the SCK model (in which the association reaction is represented by a partial-absorption boundary condition~\cite{AgmonSzaboRevRx1990,CollinsKimballPartialAdsorp,KeizerJPhysChem82}).

\section{Steady-State Detailed Balance and Equilibrium Solutions}
\label{sect:SSDBandEqSoln}
At steady-state we find the solutions to~\eqref{eq:bothDiffuseEqs}, $\peq(\vx,\vy)$ and $\pbeq(\vz)$, satisfy
\begin{equation} \label{eq:bothEqsSS}
  \begin{aligned}
    0 &= \L \peq(\vx,\vy) - \kp(\vx,\vy) \peq(\vx,\vy) + \int_{\Omega} \km(\vx,\vy\vert\vz) \pbeq (\vz) \, d \vz, \\
    0 &= \Lb \pbeq(\vz) - \km(\vz) \pbeq(\vz) + \int_{\Omega^{2}} \kp(\vz\vert\vx,\vy) \peq(\vx,\vy) \, d\vx \, d \vy,
  \end{aligned}
\end{equation}
with a reflecting zero Neumann boundary condition in each coordinate on $\partial \Omega$ and the normalization
\begin{equation*}
\int_{\OSqD} \peq(\vx,\vy) \, d\vx \, d \vy + \int_{\Omega} \pbeq(\vz) \, d\vz = 1.
\end{equation*}

As discussed in the introduction, we expect that the steady-state for the reversible $\textrm{A} + \textrm{B} \leftrightarrows \textrm{C}$ reaction is a state of thermodynamic equilibrium, with (pointwise) detailed balance of spatial reaction fluxes holding for the reactive terms, i.e.
\begin{equation}
  \label{eq:detailedBalanceRxEq}
  \kp(\vz\vert\vx,\vy) \peq(\vx,\vy) = \km(\vx,\vy\vert\vz) \pbeq(\vz).
\end{equation}
By integrating~\eqref{eq:detailedBalanceRxEq} in $\vz$
(resp. $(\vx,\vy)$), we find that the reactive terms
in~\eqref{eq:bothEqsSS} cancel out,
\begin{align*}
  \kp(\vx,\vy) \peq(\vx,\vy) &= \int_{\Omega} \km(\vx,\vy\vert\vz) \pbeq (\vz) \, d \vz,\\
  \km(\vz) \pbeq(\vz) &= \int_{\Omega^{2}} \kp(\vz\vert\vx,\vy) \peq(\vx,\vy) \, d\vx \, d \vy.
\end{align*}
These then imply that $\L \peq = 0$ on $\OSqD$ and
$\Lb \pbeq = 0$ on $\Omega$ which, together with the assumed reflecting zero Neumann boundary conditions on $\partial \Omega^2$ and $\partial \Omega$, gives that $\peq$ and $\pbeq$ are
constant. Using~\eqref{eq:detailedBalanceRxEq}, we see that for the system to be consistent with the principle of detailed balance, we must choose $\kp(\vz\vert\vx,\vy)$ and $\km(\vx,\vy\vert\vz)$ such that
\begin{equation} \label{eq:detailedBalancecoefRelation}
  \kp(\vz\vert\vx,\vy) \propto \km(\vx,\vy\vert\vz).
\end{equation}
We may define the dissociation constant of the reaction, $\Kd$, to be the constant of proportionality, so that \begin{equation}
    \km(\vx,\vy\vert\vz) = \Kd\kp(\vz\vert\vx,\vy).
    \label{eq:detailedBalancekpkm}
\end{equation}
Note, for detailed balance to hold this implies that $\Kd$ is also given by
\begin{equation}
    \label{eq:detailedBalancekpkmints}
    \Kd = \frac{\iint_{\OSqD} \int_{\Omega} \km(\vx,\vy\vert\vz) \, d\vz \, d\vy \, d\vx}{\iint_{\OSqD}\int_{\Omega} \kp(\vz\vert\vx,\vy) \, d\vz \, d\vy \, d\vx}.
\end{equation}

Together with the normalization condition that
\begin{equation*}
  \peq \abs{\OSqD} + \pbeq \abs{\Omega} = 1,
\end{equation*}
we obtain
\begin{theorem}
\label{thm:SSKd}
  When the detailed balance statement~\eqref{eq:detailedBalanceRxEq}
  is satisfied, on their appropriate domains of definition, $\OSqD$
  and $\Omega$ respectively,
  \begin{align} \label{eq:twoPartEqSCKSoluts}
    \peq &= \frac{\Kd}{\abs{\Omega} + \Kd \abs{\OSqD}}, 
    & \pbeq &= \frac{1}{\abs{\Omega} + \Kd \abs{\OSqD}}.
  \end{align}
  In the VR model $\OSqD = \Omega^2$ so that
  this simplifies to
  \begin{align} \label{eq:twoPartEqVRSoluts}
    \peq &= \frac{\Kd}{\abs{\Omega}(1 + \Kd \abs{\Omega})}, 
    & \pbeq &= \frac{1}{\abs{\Omega}(1 + \Kd \abs{\Omega})}.
  \end{align}
\end{theorem}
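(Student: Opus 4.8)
The plan is to leverage the two facts already secured in the discussion preceding the statement: first, that imposing pointwise detailed balance forces the equilibrium densities $\peq$ and $\pbeq$ to be spatially constant on $\OSqD$ and $\Omega$ respectively (via the vanishing of the diffusive fluxes $\L\peq = 0$, $\Lb\pbeq = 0$ together with the reflecting Neumann conditions); and second, that the association and dissociation kernels must be proportional, $\km(\vx,\vy|\vz) = \Kd\,\kp(\vz|\vx,\vy)$, as in~\eqref{eq:detailedBalancekpkm}. Granting these, the remaining task is purely algebraic: determine the two scalar constants from the detailed balance relation together with the normalization condition.

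First I would substitute the (now constant) values of $\peq$ and $\pbeq$ into the pointwise detailed balance statement~\eqref{eq:detailedBalanceRxEq}, giving $\kp(\vz|\vx,\vy)\,\peq = \km(\vx,\vy|\vz)\,\pbeq$. Using the proportionality~\eqref{eq:detailedBalancekpkm} to replace $\km$ by $\Kd\,\kp$, this reduces to $\kp(\vz|\vx,\vy)\,\peq = \Kd\,\kp(\vz|\vx,\vy)\,\pbeq$. At any $(\vx,\vy,\vz)$ in the support of $\kp$ the common factor cancels, yielding the single scalar relation $\peq = \Kd\,\pbeq$ tying the two unknown constants together.

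Next I would close the system using the normalization $\peq\abs{\OSqD} + \pbeq\abs{\Omega} = 1$. Substituting $\peq = \Kd\,\pbeq$ gives $\pbeq\big(\Kd\abs{\OSqD} + \abs{\Omega}\big) = 1$, hence $\pbeq = 1/(\abs{\Omega} + \Kd\abs{\OSqD})$ and $\peq = \Kd/(\abs{\Omega} + \Kd\abs{\OSqD})$, which are exactly the claimed formulas~\eqref{eq:twoPartEqSCKSoluts}. For the VR specialization I would simply note that $\OSqD = \Omega^2$ forces $\abs{\OSqD} = \abs{\Omega}^2$, and factoring $\abs{\Omega}$ out of the denominator reproduces~\eqref{eq:twoPartEqVRSoluts}.

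Since all the analytic content has been front-loaded into the preceding paragraphs, I do not anticipate a genuine obstacle; the statement is essentially a two-by-two linear solve. The only point meriting mild care is justifying the cancellation of $\kp$ in deriving $\peq = \Kd\,\pbeq$: one must note that $\kp$ has nonempty support, so that the pointwise balance actually constrains the ratio of the two constants rather than holding vacuously. With a genuinely reversible reaction this is automatic.
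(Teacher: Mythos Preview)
Your proposal is correct and follows essentially the same approach as the paper: the paper's argument is entirely contained in the paragraphs immediately preceding the theorem, establishing that $\peq$ and $\pbeq$ are constant, that $\km = \Kd\,\kp$, and then invoking the normalization $\peq\abs{\OSqD} + \pbeq\abs{\Omega} = 1$ to ``obtain'' the theorem. You have simply made explicit the intermediate relation $\peq = \Kd\,\pbeq$ and the ensuing two-by-two solve that the paper leaves to the reader.
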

Let $\bar{P} = \int_{\Omega^2}\peq(\vx,\vy)\, d\vx \, d\vy$ and $\bar{P}_{\textrm{b}} = \int_{\Omega}\pbeq(\vz)\, d\vz$ denote the probabilities to be in the unbound and bound state respectively. For the VR model~\eqref{eq:twoPartEqVRSoluts} gives
\begin{align} \label{eq:ssprobs}
\bar{P} &= \frac{\Kd\abs{\Omega}}{1 + \Kd \abs{\Omega}}, & \bar{P}_{\textrm{b}} &= \frac{1}{1 + \Kd \abs{\Omega}},
\end{align}
which are identical to what one obtains in a well-mixed stochastic chemical kinetics model, see Appendix~\ref{sect:SSProbKd}.

\section{Detailed Balance in Bounded Domains}
\label{sect:application_DB}
Unless $\Omega$ is convex, the line segment connecting the positions of an \textrm{A} molecule and a \textrm{B} molecule may leave the domain. As such, when using the placement density given by~\eqref{eq:gammaStd} the position of a new \textrm{C} molecule may fall outside $\Omega$ if $\gamma \in (0,1)$. One approach to address this issue is that used in the Smoldyn simulator~\cite{AndrewsBrayPhysBio2004}, where a straight line is drawn from one molecule to the other. If the straight line crosses the domain boundary, then the \textrm{A} and \textrm{B} molecules are not allowed to react. Let $\ell_{\vx,\vy} = \{s \vx + (1-s)\vy \mid s \in [0,1]\}$ be the straight line connecting an \textrm{A} molecule at $\vx$ and a \textrm{B} molecule at $\vy$. $\kphat(\vz|\vx,\vy)$ is modified to
\begin{equation} \label{eq:SmoldynCPlacement}
  \kphat(\vz|\vx,\vy) = 
  \begin{cases}
    \delta(\vz - (\gamma \vx + (1-\gamma) \vy)), &\text{if } \ell_{\vx,\vy} \subseteq \Omega \\
    0, &\text{otherwise.}
  \end{cases}
\end{equation}
With this choice
\begin{equation*}
  \begin{aligned}  
  \prob \brac{\text{reaction is accepted}} &=  
  \int_{\Omega} \kphat(\vz\vert\vx,\vy) \, d\vz  \\
  &= \begin{cases}
    1, &\text{if } \ell_{\vx,\vy} \subseteq \Omega \\
    0, &\text{otherwise,}
  \end{cases}
  \end{aligned}
\end{equation*}
so that only binding events for which the line segment is within $\Omega$ are accepted.  In contrast, with the choice~\eqref{eq:gammaStd},
\begin{equation*}
\int_{\Omega} \kphat(\vz\vert\vx,\vy) \, d\vz = \ind_{\Omega}(\gamma \vx + (1-\gamma) \vy),
\end{equation*}
so that binding events where $\vz$ would be placed outside $\Omega$ are rejected (i.e. are not allowed to proceed). 
Note, if $\Omega$ is convex then $\gamma \vx + (1-\gamma)\vy \in \Omega$ for all $\vx$ and $\vy$ in $\Omega$. Association reactions are therefore always successful and the choices~\eqref{eq:gammaStd} and~\eqref{eq:SmoldynCPlacement} are identical.

\begin{figure*}
  \includegraphics[width=.75\textwidth]{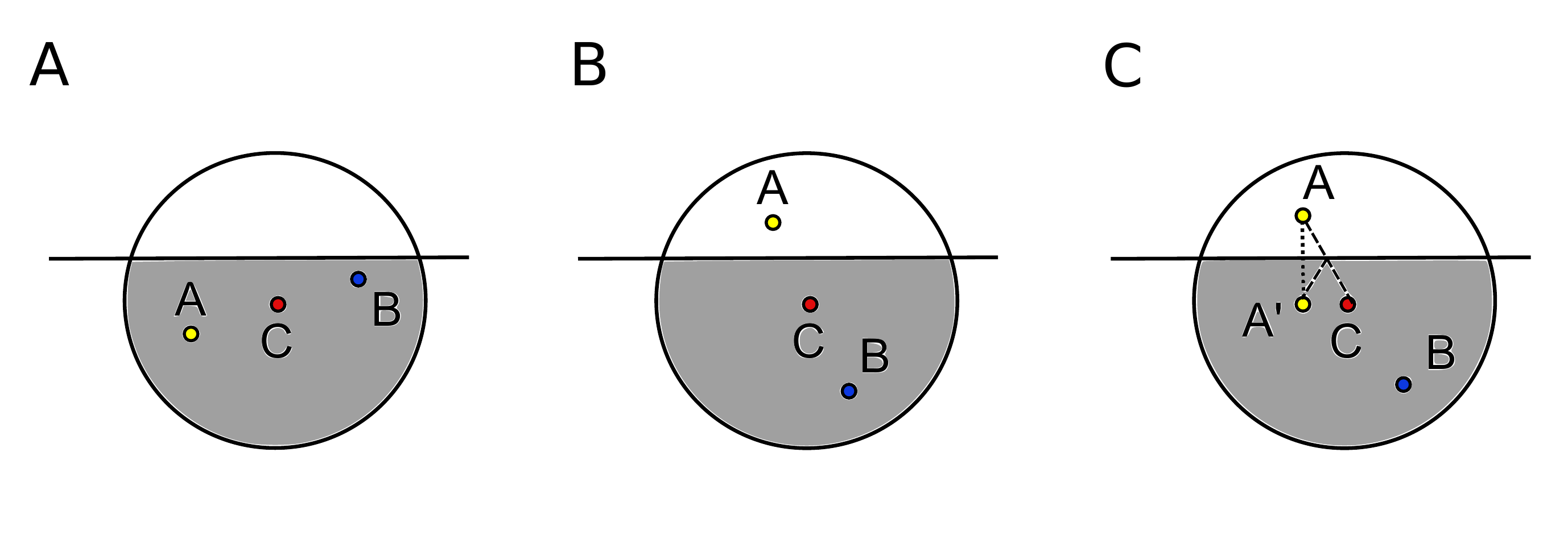}
  \caption{Rejection vs. reflection placement mechanisms for the $\textrm{C} \to \textrm{A} + \textrm{B}$ reaction in the standard Doi VR model considered in Section~\ref{sect:application_DB}. A) and B) illustrate the acceptance/rejection mechanism which preserves detailed balance. Here positions for the $\textrm{A}$ and $\textrm{B}$ molecules are chosen ignoring the boundary. The approach in Section~\ref{sect:VR_DB} and~\ref{sect:dbreject} samples the position of the $\textrm{A}$ particle, $\vx$, uniformly within the ball of radius $(1-\gamma)\rb$ about $\vz$, the position of the $\textrm{C}$ molecule. The position of the $\textrm{B}$ molecule, $\vy$, is then chosen by reflection as $\vy = (\vz - \gamma \vx) / (1-\gamma)$. The circle corresponds to the ball of radius $(1-\gamma)\varepsilon$ in which the $\textrm{A}$ molecule can be placed, with the gray region representing the portion of the ball within the domain, $\Omegaf$. If both molecules end up within the domain, $\Omegaf$, their positions are accepted (case A), otherwise the unbinding event is rejected (case B). C) illustrates a mirror-like reflection process that can be used if either the product $\textrm{A}$ or $\textrm{B}$ molecules end up outside the domain~\cite{AndrewsBrayPhysBio2004}. The particle is initially placed ignoring the boundary, but then reflected off the surface relative to the normal to the surface at the point of intersection of the line connecting the position of the $\textrm{C}$ molecule and the product molecule. In C) the \textrm{A} molecule is reflected to $\textrm{A}'$. Figures are draw with $\gamma = \tfrac{1}{2}$.}
  \label{fig:Fig2}
\end{figure*}
Analogously, for the backward dissociation reaction when $\vz$ is near $\partial \Omega$ there is the possibility that one or both of the sampled product particle positions, $\vx$ and $\vy$, end up outside the domain. One approach that has been used in existing simulators is that developed by Smoldyn~\cite{AndrewsBrayPhysBio2004}. There, if either of the \textrm{A} or \textrm{B} molecules are placed outside the domain, they are then reflected across the boundary surface back into $\Omega$, so that unbinding reactions are always successful. However, even if $\Omega$ is assumed convex so that association reactions are always accepted, we show in Section~\ref{sect:DBrejection} that a detailed balance preserving Doi or SCK unbinding kernel with the forward placement kernel~\eqref{eq:gammaStd} results in a reduced unbinding rate, which is equivalent to rejection of some unbinding events near $\partial \Omega$. As such, a Doi or SCK unbinding kernel with reflection violates detailed balance near domain boundaries. The rejection-based kernel we discuss preserves detailed balance, but in contrast to a reflection model in which unbinding is always successful, requires the use of a reduced and spatially-varying unbinding rate near domain-boundaries. For the Doi model of the next subsection the rejection versus reflection unbinding processes are illustrated in Fig.~\ref{fig:Fig2}.

Whether a decreased unbinding rate is physically appropriate will depend on the underlying physics for the unbinding reaction. For example, if the domain boundary can be thought of as impeding two particles from separating apart, using a reduced reaction-rate may be physically realistic. In contrast, if being near the domain boundary should have no effect on the timescale for molecules to dissociate, a constant unbinding rate would be more appropriate. In Section~\ref{sect:kpfromkm} we take the opposite perspective to~\eqref{eq:detailedBalancekpkm}, using this formula to choose $\kp$ given a choice for $\km$ in a Doi-type model with a spatially uniform unbinding rate, i.e. $\km(\vz) = \mum$. In this special case we show that the detailed balance relation then requires $\kp$ to give an increased probability per time for a reaction to occur when reactants are near the domain boundary. As such, preserving detailed balance, spatially uniform unbinding rates, and standard association kernels, $\kp(\vx,\vy)$, appears to require alternative product placement kernels from those commonly used in applications.

We begin in the next section by illustrating several commonly used choices of rate functions and placement kernels, demonstrating for which pairs detailed balance holds or does not hold. In Section~\ref{sect:DBrejection} we then show that the standard Doi VR and SCK CR model choices for the forward reaction  kernel require rejection of some unbinding events near the domain boundary to satisfy detailed balance. In Section~\ref{sect:kpfromkm} we show that assuming detailed balance and a spatially uniform unbinding rate for a version of the Doi model results in an increased probability per time for the forward reaction to occur in the vicinity of the domain boundary. Finally, in Section~\ref{sect:numerics} we illustrate via numerical simulations how differences can arise when using the detailed balance preserving rejection-based kernel versus a reflection-based approach.


\subsection{Detailed Balance Determines the Unbinding Kernel} \label{sect:DB_for_common_models}
The detailed balance condition given by \eqref{eq:detailedBalancekpkm} implicitly defines one of the binding or unbinding kernels in terms of the other through the dissociation constant, $\Kd$. As such, any detailed balance preserving binding and unbinding kernels must have the same functional form. We now show that for the standard choices of $\kp(\vz\vert\vx,\vy)$, several standard choices for $\km(\vx,\vy\vert\vz)$ will preserve detailed balance for all $(\vx,\vy) \in \OSqD$ and all $\vz \in \Omega$ when including \emph{rejection} near $\partial\Omega$. We further demonstrate that several other choices that have been used in the literature will violate detailed balance for particles near the domain boundary.
 
 \subsubsection{The Standard Doi VR Model}
\label{sect:VR_DB}
In the Doi VR model with a zero reflecting Neumann boundary condition, a common choice of reactive terms are given by~\eqref{eq:lambdaMultParticleVR}, \eqref{eq:gammaStd}, and~\eqref{eq:mTwoPartDefVR},
    \begin{align}
        \kp(\vz \vert \vx, \vy) &= \lambda\ind_{B_{\rb}(\vO)}(\vx - \vy)\delta(\vz - \gamma\vx - (1-\gamma)\vy), \label{eq:fwdRxDoi}\\
        \km(\vx,\vy \vert \vz) &= \mum\delta\paren{\vy - \frac{\vz-\gamma\vx}{1-\gamma}}\frac{\ind_{B_{(1-\gamma)\rb}(\vO)}(\vx-\vz)}{\abs{B_{(1-\gamma)\rb}(\vO)}} \label{eq:bwdRxDoi}.
    \end{align}
In~\eqref{eq:fwdRxDoi} the $\delta$ function implies that $\vy = (\vz - \gamma \vx)/(1 - \gamma)$ so that
\begin{equation} \label{eq:fwdtobackrates}  
 \begin{aligned}
   \ind_{B_{\rb}(\vO)}(\vx - \vy) \delta(\vz - \gamma\vx - (1-\gamma)\vy) 
   &= (1-\gamma)^{-d} \ind_{B_{\rb}(\vO)}\paren{\frac{\vx-\vz}{1-\gamma}} \delta\paren{\vy - \frac {\vz-\gamma\vx}{1-\gamma}} \\
   &= (1-\gamma)^{-d} \ind_{B_{(1-\gamma)\rb}(\vO)}\paren{\vx-\vz} \delta\paren{\vy - \frac{\vz-\gamma\vx}{1-\gamma}}.
 \end{aligned}
\end{equation}
Here we have used that $\delta(a\vx) = \delta(\vx)/\abs{a}^d$. The two rate kernels are therefore proportional, and detailed balance will hold if $\lambda$ and $\mum$ are chosen appropriately. 

The constant of proportionality, $\Kd$, can be determined using \eqref{eq:detailedBalancekpkmints}, allowing us to write $\km$ in terms of $\kp$ for all $(\vx,\vy) \in \Omega^2$ and all $\vz \in \Omega$. Assuming the rate constants are non-zero, evaluating~\eqref{eq:detailedBalancekpkmints} we find
\begin{equation}
  \begin{aligned}
    \Kd 
    &= \frac{\int_{\Omega^3}  \mum\delta\paren{\vy - \frac{\vz-\gamma\vx}{1-\gamma}}\frac{\ind_{B_{(1-\gamma)\rb}(\vO)}(\vx-\vz)}{\abs{B_{(1-\gamma)\rb}(\vO)}} \, d\vz \, d\vx \,   d\vy}
    {\int_{\Omega^3} \lambda\ind_{B_{\rb}(\vO)}(\vx - \vy)\delta(\vz - \gamma\vx - (1-\gamma)\vy) \, d\vz \, d\vx \, d\vy}, \\
    &= \frac{\frac{\mum}{\abs{B_{\rb}(\vO)}}\int_{\Omega^2}\ind_{B_{\rb}(\vO)}(\vx-\vy)\ind_{\Omega}(\gamma\vx + (1-\gamma)\vy) \,d\vx \, d\vy}
    {\lambda\int_{\Omega^2}\ind_{B_{\rb}(\vO)}(\vx - \vy)\ind_{\Omega}(\gamma\vx + (1-\gamma)\vy) \, d\vx \, d\vy}  \\
    &= \frac{\mum}{\lambda\abs{B_{\rb}(\vO)}}.
  \end{aligned}
  \label{eq:VR_Kd}
\end{equation}
Here we have again used that $\delta(a\vx) = \delta(\vx)/\abs{a}^d$, and used that $\abs{B_{(1-\gamma)\rb}(\vO)} = (1-\gamma)^d\abs{B_{\rb}(\vO)}$. From~\eqref{eq:fwdRxDoi} and~\eqref{eq:fwdtobackrates} we may then write $\km$ in terms of $\kp$ as
\begin{equation*}
  \begin{aligned}
    \Kd\kp(\vz \vert \vx,\vy) &= \mum\frac{\ind_{B_{\rb}(\vO)}(\vx - \vy)}{\abs{B_{\rb}(\vO)}}\delta(\vz - \gamma\vx - (1-\gamma)\vy), \\
    &= \mum\delta\paren{\vy - \frac{\vz-\gamma\vx}{1-\gamma}}\frac{\ind_{B_{(1-\gamma)\rb}(\vO)}(\vx-\vz)}{\abs{B_{(1-\gamma)\rb}(\vO)}} \\
    &= \km(\vx,\vy \vert \vz).
  \end{aligned}      
\end{equation*}
In Section~\ref{sect:dbreject}, considering the association kernel as fixed we show that the unbinding kernel~\eqref{eq:bwdRxDoi} implies the unbinding rate $\km(\vz)$ is smaller than $\mum$ for $\vz$ near $\partial \Omega$, while equal to $\mum$ away from the boundary. We then illustrate how this can be realized in simulations by allowing molecules to unbind at rate $\mum$ while ignoring the domain boundary, and then rejecting unbinding events that lead to a molecule ending up outside the domain. See also Fig.~\ref{fig:Fig2}A and Fig.~\ref{fig:Fig2}B for an illustration of the acceptance/rejection process.

\subsubsection{An alternative VR Model}
\label{sect:AlternativeVR}
We consider an alternative VR model for the reversible $\textrm{A} + \textrm{B} \rightleftharpoons \textrm{C}$ reaction, based on the model proposed in \cite{Donevetal2018}. As in the Doi VR model the \textrm{A} molecule reacts with the \textrm{B} molecule with rate $\lambda$ when their separation is within a distance $\rb$. Let $\rho$ be the probability of placing the \textrm{C} molecule at the position of the \textrm{A} molecule upon binding, and let $(1-\rho)$ be the probability of placing the \textrm{C} molecule at the position of the \textrm{B} molecule. Hence, upon binding ($\textrm{A} + \textrm{B} \to \textrm{C}$), one of the two molecules is chosen at random with probability $\rho$ (or $1-\rho$) and turns into the \textrm{C} molecule while the other molecule disappears. One simple choice of the selection probability $\rho$ is to take $\rho = \frac{1}{2}$ as in \cite{Donevetal2018}. For a general $\rho$, $\kp(\vz\vert\vx,\vy)$ is 
\begin{equation}\label{eq:fwdRxAltVR} 
 \kp(\vz\vert\vx,\vy) 
 = \lambda\ind_{B_{\rb}(\vO)}(\vx - \vy)\brac{\rho \,\delta(\vz - \vx) + (1-\rho)\delta(\vz-\vy)}.
\end{equation}
Unbinding ($\textrm{C} \to \textrm{A} + \textrm{B}$) is assumed to occur with a rate $\mum$. Upon unbinding a product \textrm{A}(or \textrm{B}) is chosen at random with probability $\rho$ (or $1-\rho$) and is placed at the position of \textrm{C}. The other product is placed uniformly in a sphere centered at the position of \textrm{C} with radius $\rb$. $\km(\vx,\vy\vert\vz)$ becomes
\begin{equation}  
 \begin{aligned}\label{eq:bwdRxAltVR}
 \km(\vx, \vy \vert \vz) = \mum\Bigg[&\rho\frac{\ind_{B_{\rb}(\vO)}(\vy - \vz)}{\abs{B_{\rb}(\vO)}}\delta(\vx - \vz)
  + (1-\rho)\frac{\ind_{B_{\rb}(\vO)}(\vx - \vz)}{\abs{B_{\rb}(\vO)}}\delta(\vy - \vz)\Bigg].
 \end{aligned}
\end{equation}

These reaction rate functions are shown to be detailed-balance preserving with periodic boundary conditions in \cite{Donevetal2018}. We now show that the detailed balance condition~\eqref{eq:detailedBalancekpkm} holds for all $(\vx,\vy) \in \Omega^2$ and all $\vz \in \Omega$ with reflecting boundary conditions. With \eqref{eq:fwdRxAltVR} and \eqref{eq:bwdRxAltVR}, and using the same scaling properties as in Section~\ref{sect:VR_DB}, when detailed balance holds the dissociation constant $\Kd$ is given by
  \begin{equation*}
  \begin{aligned}
    \Kd &= \frac{\int_{\Omega^2} \int_{\Omega} \mum\brac{\rho\frac{\ind_{B_{\rb}(\vO)}(\vy - \vz)}{\abs{B_{\rb}(\vO)}}\delta(\vx - \vz) + (1-\rho)\frac{\ind_{B_{\rb}(\vO)}(\vx - \vz)}{\abs{B_{\rb}(\vO)}}\delta(\vy - \vz)}\, d\vz \, d\vx \, d\vy}
      {\int_{\Omega^2} \int_{\Omega} \lambda\ind_{B_{\rb}(\vO)}(\vx-\vy)\brac{\rho\delta(\vz - \vx) + (1-\rho)\delta(\vz - \vy)} \, d\vz \, d\vx \, d\vy}, \\
    &= \frac{\mum}{\lambda\abs{B_{\rb}(\vO)}}\frac{\int_{\Omega^2}\brac{\rho\ind_{B_{\rb}(\vO)}(\vy - \vx) + (1-\rho)\ind_{B_{\rb}(\vO)}(\vx - \vy)}\, d\vx \, d\vy}
      {\int_{\Omega^2}\ind_{B_{\rb}(\vO)}(\vx-\vy) \, d\vx \, d\vy} \\
      &= \frac{\mum}{\lambda\abs{B_{\rb}(\vO)}}.
    \end{aligned}
  \end{equation*}
This immediately gives
\begin{equation*}
    \begin{aligned}
      \Kd\kp(\vz \vert \vx,\vy) 
      &= \mum\brac{\rho\frac{\ind_{B_{\rb}(\vO)}(\vx-\vy)}{\abs{B_{\rb}(\vO)}}\delta(\vz-\vx) + (1-\rho)\frac{\ind_{B_{\rb}(\vO)}(\vx-\vy)}{\abs{B_{\rb}(\vO)}}\delta(\vz-\vy)}, \\
      &= \mum\brac{\rho\frac{\ind_{B_{\rb}(\vO)}(\vy - \vz)}{\abs{B_{\rb}(\vO)}}\delta(\vx - \vz) + (1-\rho)\frac{\ind_{B_{\rb}(\vO)}(\vx - \vz)}{\abs{B_{\rb}(\vO)}}\delta(\vy - \vz)} \\
      &= \km(\vx,\vy \vert \vz),
    \end{aligned}
\end{equation*}
showing detailed balance.

In Appendix~\ref{sect:DBrejectionAltVR}, considering the association kernel as fixed we show that detailed balance holding corresponds to a decreased effective unbinding rate, $\km(\vz)$, for $\vz$ near $\partial \Omega$. This can once again be realized in simulations by using a rejection kernel near $\partial \Omega$, initially ignoring the domain boundary when a \textrm{C} molecule dissociates into \textrm{A} and \textrm{B} molecules, but rejecting unbinding events that lead to a molecule outside the domain. 

\subsubsection{The SCK CR Model}
\label{sect:CR_DB}
For the SCK CR model, one can repeat the same calculation as for the Doi VR model with the reaction terms \eqref{eq:lambdaMultParticleVR} and \eqref{eq:mTwoPartDefVR} replaced by \eqref{eq:lambdaMultParticleCR} and \eqref{eq:mTwoPartDefCR}. We show in Appendix~\ref{app:CRKd} that the dissociation constant, $\Kd$, for the SCK CR model is
\begin{equation*}
  \Kd = \frac{\mum}{\alpha\sqrt{2}\abs{\partial B_{\rb}(\vO)}},
\end{equation*} 
and detailed balance is preserved. As in the previous two sections, considering the association kernel as fixed this again corresponds to having a non-uniform unbinding rate, $\km(\vz) < \mum$ for $\vz$ near $\partial \Omega$, as we prove in Appendix~\ref{sect:DBrejectionSCK}. In simulations, this kernel can again be implemented by ignoring the boundaries when placing molecules, but rejecting those reaction events that lead to molecules outside the domain. 

\subsubsection{Reflection across the Boundary}
\label{sect:SmoldynDB}
Here we consider an alternative reaction kernel to overcome the problem of having to account for the boundary when a \textrm{C} molecule dissociates into \textrm{A} and \textrm{B} molecules. The boundary is initially ignored when placing the products, with those placed outside the domain then reflected across the boundary back into $\Omega$, see Fig.~\ref{fig:Fig2}C. The kernel is inspired by how Smoldyn handles products that are placed outside a domain, and represents a mirror-like reflection process~\cite{AndrewsBrayPhysBio2004}. With this reflection kernel unbinding events are always successful and we have
  \begin{equation}
  \label{eq:SmoldynUnbindingProb}
    \km(\vz) = \int_{\Omega^2} \km(\vx,\vy\vert\vz) \, d\vx \, d\vy = \mum, 
  \end{equation}
for all $\vz \in \Omega$, i.e. the unbinding rate is spatially constant. Note, however, as mentioned in the introduction Smoldyn's standard reaction model is based on the pure Smoluchowski~\cite{SmoluchowskiDiffLimRx} diffusion limited reaction mechanism, which corresponds to molecules reacting instantly upon reaching a separation given by the reaction-radius, $\rb$. It is therefore not a special case of the SCK CR model we now consider.

We now show what $\km(\vz)$ must be to satisfy detailed balance in the SCK model with the mirror-like reflection unbinding process, and demonstrate it is inconsistent with~\eqref{eq:SmoldynUnbindingProb}. In what follows, we assume $\Omega$ is convex so that binding events are always successful. If detailed balance holds, by~\eqref{eq:detailedBalancekpkm}, \eqref{eq:lambdaMultParticleCR}, and \eqref{eq:SmoldynCPlacement}, $\km(\vx,\vy\vert\vz)$ can be rewritten as
\begin{equation*}
  \begin{aligned}
  \km(\vx,\vy\vert\vz) &= \Kd \kp(\vz\vert\vx,\vy) \\
  &= \Kd\sqrt{2} \alpha \delta_{\dB_{\rb}(\vO)}(\vx-\vy)\delta(\vz - \gamma\vx - (1-\gamma)\vy).
  \end{aligned}
\end{equation*}
Then $\km(\vz)$ becomes
\begin{widetext}    
\begin{equation}
\label{eq:SmoldynUnbindingDB}
  \begin{aligned}
    \km(\vz) &= \Kd\sqrt{2} \alpha\int_{\Omega^2} \delta_{\dB_{\rb}(\vO)}(\vx-\vy)\delta(\vz - \gamma\vx - (1-\gamma)\vy) \, d\vx \, d\vy, \\
    &= \frac{\Kd\sqrt{2}\alpha}{(1-\gamma)^d}\int_{\Omega}\delta_{\dB_{\rb}(\vO)}\paren{\frac{\vx-\vz}{1-\gamma}}\ind_{\Omega}\paren{\frac{\vz - \gamma\vx}{1-\gamma}} \, d\vx, \\
    &= \frac{\Kd\sqrt{2}\alpha}{(1-\gamma)^{(d-1)}}\int_{\R^d} \delta_{\dB_{(1-\gamma)\rb}(\vz)}(\vx)
    \ind_{\Omega}\paren{\vx} \ind_{\Omega}\paren{\frac{\vz - \gamma \vx}{1-\gamma}} \, d\vx.
  \end{aligned}
\end{equation}
\end{widetext}
Here in the third equation we have used the identity that for a scaling constant, $\zeta > 0$, 
\begin{equation} \label{eq:deltasurfmeasidentity}
\delta_{\dB_{\rb}(\vO)}\paren{\frac{\vx-\vz}{\zeta}} = \zeta \delta_{\dB_{\zeta\rb}(\vz)}(\vx),
\end{equation}
which is shown in Appendix~\ref{app:proofOfDeltaSurfScaling}. Defining the translated and dilated set $\hat{\Omega}_{\gamma}(\vz)$ by
\begin{equation}\label{eq:omegahat}
\hat{\Omega}_{\gamma}(\vz) = \left\{ \vx \in \Omega \,\middle|\, \frac{\vz - \gamma \vx}{1 - \gamma} \in \Omega\right\},
\end{equation}
we have that~\eqref{eq:SmoldynUnbindingDB} simplifies to
\begin{equation*}
\km(\vz) = \frac{\Kd\sqrt{2}\alpha}{(1-\gamma)^{(d-1)}} \abs{\dB_{(1-\gamma)\rb}(\vz) \cap \hat{\Omega}_{\gamma}(\vz)}.
\end{equation*}
Near $\partial \Omega$ the intersection of the two sets will vary in $\vz$, implying that $\km(\vz)$ is again non-constant. With a spatially uniform unbinding rate, e.g.~\eqref{eq:SmoldynUnbindingProb}, we conclude the reflection kernel for handling domain boundaries does not preserve detailed balance at all points in $\Omega$. 

\subsubsection{The $\lambda-\rho$ Model}
The $\lambda-\rho$ model for reversible reactions in~\cite{ErbanChapman2011} considers the case that the \textrm{B} and \textrm{C} molecules stay stationary at the origin ($\DB = \DC = 0$ and $\vy = \vz = \vO$). Any \textrm{A} molecule is allowed to react with the \textrm{B} molecule with rate $\lambda$ when within $\rb$ of each other so that
  \begin{equation*}
    \kp(\vz\vert\vx,\vy) = \lambda\ind_{B_{\rb}(\vO)}(\vx)\delta(\vy)\delta(\vz).
  \end{equation*}
In one version of the model, upon unbinding a newly created \textrm{A} molecule is placed uniformly on a sphere with radius $\bar{\sigma}$ about the origin, with the stationary \textrm{B} molecule placed at the origin so that
  \begin{equation*}
    \km(\vx,\vy\vert\vz) = \mum\frac{\delta_{\partial B_{\bar{\sigma}}(\vO)}(\vx)}{\abs{\partial B_{\bar{\sigma}}(\vO)}}\delta(\vy)\delta(\vz).
  \end{equation*}

We note that the reaction kernels in the $\lambda-\rho$ model are in different functional forms -- the binding kernel is given by an indicator function on a ball, whereas the unbinding kernel is defined by a spherical $\delta$ surface measure. By \eqref{eq:detailedBalancekpkm} we can immediately conclude that these reaction kernels can not preserve detailed balance.


\subsection{Detailed Balance Leads to Rejection of Some Unbinding Events Near Boundaries}
\label{sect:dbreject}
We now show that for the standard form of the Doi model with detailed balance, near domain boundaries unbinding rates are effectively reduced. An analogous result is derived for the standard form of the SCK model in Appendix~\ref{sect:DBrejectionSCK}. These reduced unbinding rates can be realized in simulations through rejection of unbinding events that would lead to molecules being placed outside the domain. In contrast, once a reactant \textrm{C} molecule is sufficiently far from the boundaries that product molecules would always be placed within the domain, unbinding reactions are always successful.
\label{sect:DBrejection}
\begin{theorem}
  \label{thm:plcmProb}
  Consider the Doi binding kernel~\eqref{eq:fwdRxDoi} with corresponding detailed balance preserving Doi unbinding kernel~\eqref{eq:mTwoPartDefVR} and unbinding rate constant, $\mum$. For a C molecule at $\vz$ in $\Omega$,
  the probability per time the molecule successfully dissociates is
  \begin{equation} 
    \int_{\Omega^2}\km(\vx,\vy\vert\vz) \, d\vx \, d\vy < \mum
    \label{eq:DoiPlcmProb}
  \end{equation}
  when $\vz$ is within $\max(\gamma,1-\gamma)\rb$ of $\partial\Omega$. Similarly,
  \begin{equation} 
    \int_{\Omega^2}\km(\vx,\vy\vert\vz) \, d\vx \, d\vy = \mum
    \label{eq:DoiPlcmProb_away_from_bd}
  \end{equation}
  if $\vz$ is at least $\max(\gamma,1-\gamma)\rb$ away from $\partial\Omega$. 
\end{theorem}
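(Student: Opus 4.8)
The plan is to evaluate the dissociation rate $\km(\vz)$ explicitly from the unbinding kernel~\eqref{eq:mTwoPartDefVR} and reduce the whole statement to a geometric comparison of two balls against the domain. First I would carry out the $\vy$-integration using the $\delta$-function, which forces $\vy = (\vz-\gamma\vx)/(1-\gamma)$ and, since the integral runs over $\vy\in\Omega$, contributes a factor $\ind_{\Omega}((\vz-\gamma\vx)/(1-\gamma))$. Writing $\ind_{B_{(1-\gamma)\rb}(\vO)}(\vx-\vz) = \ind_{B_{(1-\gamma)\rb}(\vz)}(\vx)$ and recalling the set $\hat{\Omega}_{\gamma}(\vz)$ from~\eqref{eq:omegahat}, the remaining $\vx$-integral over $\Omega$ collapses to
\begin{equation*}
  \km(\vz) = \frac{\mum}{\abs{B_{(1-\gamma)\rb}(\vO)}}\,\abs{B_{(1-\gamma)\rb}(\vz) \cap \hat{\Omega}_{\gamma}(\vz)},
\end{equation*}
so both claims become statements about the Lebesgue measure of the admissible $\vx$-positions relative to $\abs{B_{(1-\gamma)\rb}(\vO)}$.

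The key geometric observation I would record next is that the affine map $\vx \mapsto \vy = (\vz-\gamma\vx)/(1-\gamma)$ satisfies $\vy-\vz = -\tfrac{\gamma}{1-\gamma}(\vx-\vz)$, hence is a bijection carrying the $\vx$-ball $B_{(1-\gamma)\rb}(\vz)$ onto the $\vy$-ball $B_{\gamma\rb}(\vz)$. A successful dissociation therefore requires both $\vx \in \Omega$ inside the ball of radius $(1-\gamma)\rb$ and $\vy \in \Omega$ inside the ball of radius $\gamma\rb$ about $\vz$; the larger of these radii is exactly $\max(\gamma,1-\gamma)\rb$. The integrand in the formula above is the product of indicators $\ind_\Omega(\vx)\,\ind_\Omega(\vy(\vx))$, taking values in $\{0,1\}$, so $\km(\vz)=\mum$ precisely when this product equals one almost everywhere on the $\vx$-ball, and $\km(\vz)<\mum$ as soon as it vanishes on a set of positive measure.

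For~\eqref{eq:DoiPlcmProb_away_from_bd} I would observe that if $\mathrm{dist}(\vz,\partial\Omega)\ge\max(\gamma,1-\gamma)\rb$, then both balls lie entirely in $\Omega$: the $\vx$-ball because its radius is at most this distance, and the $\vy$-ball likewise, so $\vy(\vx)\in B_{\gamma\rb}(\vz)\subseteq\Omega$ for every admissible $\vx$. Thus the integrand is identically one on $B_{(1-\gamma)\rb}(\vz)$, the intersection is the full ball, and $\km(\vz)=\mum$. For~\eqref{eq:DoiPlcmProb} I would split on which ball is larger. When $\gamma\le\tfrac12$ the $\vx$-ball has radius $\max(\gamma,1-\gamma)\rb$; if $\mathrm{dist}(\vz,\partial\Omega)<(1-\gamma)\rb$ it straddles $\partial\Omega$ and contains points of $\R^d\setminus\bar{\Omega}$ on which $\ind_\Omega(\vx)=0$. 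When $\gamma>\tfrac12$ the $\vy$-ball is larger; if $\mathrm{dist}(\vz,\partial\Omega)<\gamma\rb$ it pokes outside $\Omega$, and pulling this exterior piece back through the affine bijection (whose linear part is a nonzero scalar multiple of the identity, hence measure-preserving up to a constant Jacobian) produces $\vx\in B_{(1-\gamma)\rb}(\vz)$ with $\ind_\Omega(\vy(\vx))=0$. In either case the integrand vanishes on this bad set, forcing the integral strictly below $\abs{B_{(1-\gamma)\rb}(\vO)}$ and giving $\km(\vz)<\mum$.

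The main obstacle is making the strict inequality genuinely rigorous, namely guaranteeing that the bad set has strictly positive Lebesgue measure rather than merely being nonempty. This is exactly where regularity of $\partial\Omega$ must be invoked: for a Lipschitz (or smoother) domain the exterior $\R^d\setminus\bar{\Omega}$ contains a truncated cone at each boundary point, so any ball whose interior meets $\partial\Omega$ overlaps the exterior in a set of positive measure. I would therefore state such regularity of $\Omega$ as a standing hypothesis and cite it at precisely this step; the remaining arithmetic, including the cancellation $\abs{B_{(1-\gamma)\rb}(\vz)}=\abs{B_{(1-\gamma)\rb}(\vO)}$ by translation invariance, is routine.
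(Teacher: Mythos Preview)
Your proposal is correct and follows essentially the same route as the paper: integrate out $\vy$ via the $\delta$-function, express $\km(\vz)$ as $\mum$ times the ratio $\abs{B_{(1-\gamma)\rb}(\vz)\cap\hat{\Omega}_{\gamma}(\vz)}/\abs{B_{(1-\gamma)\rb}(\vO)}$, and split on whether $\gamma\le\tfrac12$ or $\gamma>\tfrac12$. The only cosmetic difference is that the paper handles the case $\gamma>\tfrac12$ by rewriting the kernel with the roles of $\vx$ and $\vy$ interchanged (obtaining the symmetric formula with $B_{\gamma\rb}(\vz)$ and $\hat{\Omega}_{1-\gamma}(\vz)$) rather than pulling back through your affine bijection; your explicit remark that some regularity of $\partial\Omega$ is needed to make the strict inequality $\abs{B\cap\Omega}<\abs{B}$ rigorous is a point the paper simply asserts without comment.
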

\begin{proof}
  We will consider two cases: $\max(\gamma,1-\gamma) = 1-\gamma$ and $\max(\gamma,1-\gamma) = \gamma$. First assume $\max(\gamma,1-\gamma) = 1-\gamma$. The probability per time a \textrm{C} molecule at $\vz$ will successfully produce \textrm{A} and \textrm{B} molecules each within $\Omega$ is given by
  \begin{equation}
    \begin{aligned}
      \km(\vz) &= \int_{\Omega^2}\km(\vx,\vy\vert\vz) \, d\vx \, d\vy,\\  
      &= \mum\int_{\Omega}\frac{\ind_{B_{(1-\gamma)\rb}(\vz)}(\vx)}{\abs{B_{(1-\gamma)\rb}(\vO)}}\ind_{\Omega}\paren{\frac{\vz-\gamma\vx}{1-\gamma}} \, d\vx,\\ 
      &= \mum\frac{\abs{B_{(1-\gamma)\rb}(\vz)\cap\hat{\Omega}_{\gamma}(\vz)}}{\abs{B_{(1-\gamma)\rb}(\vO)}},
    \end{aligned}
    \label{eq:DoiUnbindProbSuccess}
  \end{equation}
where $\hat{\Omega}_{\gamma}(\vz)$ denotes the translated and dilated set defined in~\eqref{eq:omegahat}. Let $d(\vz,\partial\Omega)$ be the Euclidean distance from $\vz$ to $\partial\Omega$, and assume $d(\vz,\partial\Omega) < (1-\gamma)\rb$ so that 
  \begin{equation*}
    \abs{B_{(1-\gamma)\rb}(\vz)\cap\Omega} < \abs{B_{(1-\gamma)\rb}(\vO)}.
  \end{equation*}
  By \eqref{eq:DoiUnbindProbSuccess} this immediately gives
  \begin{equation*}
    \km(\vz) < \mum.
  \end{equation*}

Now consider $\vz$ with $d(\vz,\partial\Omega)\geq(1-\gamma)\rb$ so that the ball of radius $(1-\gamma)\rb$ about $\vz$ is strictly contained in $\Omega$.  Let $\vw \in B_{(1-\gamma)\rb}(\vz)$. We claim $\vw \in \hat{\Omega}_{\gamma}(\vz)$. Take
\begin{equation}
  \vv = \tfrac{1}{1-\gamma} \paren{\vz-\gamma \vw}
  \label{eq:OmegahatPt}
\end{equation}
so that 
\begin{equation*}
  \abs{\vv - \vz}  
  = \frac{\gamma}{1-\gamma}\abs{\vz-\vw} < \gamma\rb.
\end{equation*}
This implies that $\vv \in B_{\gamma \rb}(\vz)$, and hence in $\Omega$ as $B_{\gamma \rb}(\vz) \subset B_{(1-\gamma)\rb}(\vz) \subset \Omega$. By definition~\eqref{eq:omegahat} this shows that $\vw \in \hat{\Omega}_{\gamma}(\vz)$, which implies $B_{(1-\gamma)\rb}(\vz) \subset \hat{\Omega}_{\gamma}(\vz)$. By~\eqref{eq:DoiUnbindProbSuccess}
\begin{equation*}
  \km(\vz) = \mum\frac{\abs{B_{(1-\gamma)\rb}(\vz)}}{\abs{B_{(1-\gamma)\rb}(\vO)}} = \mum.
\end{equation*}
  
Finally, we note that using scaling properties of the $\delta$-function and the indicator function, we can equivalently write
\begin{align*}  
 \km(\vx,\vy\vert\vz) &= \mum \tfrac{(1-\gamma)^d}{\gamma^d} \delta \paren{\vx - \frac{\vz - (1-\gamma)\vy}{\gamma}} 
 \frac{\ind_{B_{(1-\gamma)\rb}(\vO)}\paren{\tfrac{1-\gamma}{\gamma}(\vy-\vz)}}{\abs{B_{(1-\gamma)\rb}(\vO)}} \\
 &= \mum \delta \paren{\vx - \frac{\vz - (1-\gamma)\vy}{\gamma}} \frac{\ind_{B_{\gamma\rb}(\vO)}\paren{\vy-\vz}}{\abs{B_{\gamma\rb}(\vO)}}. \\
\end{align*}
Integrating first with respect to $\vx$, we obtain
  \begin{equation} \label{eq:DoiUnbindProbSuccess2}
    \begin{aligned}
      \km(\vz) &= \mum\int_{\Omega}\frac{\ind_{B_{\gamma\rb}(\vz)}(\vy)}{\abs{B_{\gamma\rb}(\vO)}}\ind_{\Omega}\paren{\frac{\vz-(1-\gamma)\vy}{\gamma}} \, d\vy,\\ 
      &= \mum\frac{\abs{ B_{\gamma\rb}(\vz)\cap\hat{\Omega}_{1-\gamma}(\vz)}}{\abs{B_{\gamma\rb}(\vO)}}.
    \end{aligned}
  \end{equation}
A similar argument to above interchanging $\gamma$ and $1-\gamma$ then gives the result when $\max(\gamma,1-\gamma) = \gamma$.
\end{proof}

In simulations we can incorporate the spatially varying dissociation rate near $\partial \Omega$ given by~\eqref{eq:DoiUnbindProbSuccess} in several ways. One could pre-tabulate $\km(\vz)$, or dynamically calculate $\km(\vz)$ based on the position of a \textrm{C} molecule. Alternatively, one could use the previously mentioned rejection mechanism by interpreting~\eqref{eq:DoiUnbindProbSuccess} as the probability per time a \textrm{C} molecule attempts to dissociate apart, $\mum$, multiplied by the probability such a dissociation event is successful. In this case we could first sample the position, $\vx$, of the \textrm{A} molecule uniformly within a ball of radius $(1-\gamma)$, and then place the \textrm{B} molecule by reflection across the sphere at a position $\vy$ such that $\vz = \gamma \vx + (1-\gamma) \vy$. This method is illustrated in Fig.~\ref{fig:Fig2}A and Fig.~\ref{fig:Fig2}B. The probability density for these placement steps is just
\begin{equation*}
\delta \paren{\vy - \frac{\vz - \gamma \vx}{1 - \gamma}} \frac{\ind_{B_{(1-\gamma)\rb}(\vz)}(\vx)}{\abs{B_{(1-\gamma)\rb}(\vO)}},
\end{equation*}
which gives the corresponding (acceptance) probability that the positions of the two molecules are within $\Omega$, 
\begin{align*}
\prob \brac{\vx \in \Omega, \vy \in \Omega \,\vert\, \vz} 
&= \frac{\abs{B_{(1-\gamma)\rb}(\vz)\cap\hat{\Omega}_{\gamma}(\vz)}}{\abs{B_{(1-\gamma)\rb}(\vO)}},
\end{align*}
as derived in ~\eqref{eq:DoiUnbindProbSuccess}. Note, one could equivalently sample the \textrm{B} molecule's position uniformly within the sphere of radius $\gamma \rb$ about $\vz$, and then choose the \textrm{A} particle's position by reflection, corresponding to~\eqref{eq:DoiUnbindProbSuccess2}. 

\subsection{Preserving a constant unbinding rate} \label{sect:kpfromkm}
Our results so far show that assuming a forward reaction kernel, $\kp\paren{\vz \vert \vx,\vy}$, and choosing $\km$ via the detailed balance relation~\eqref{eq:detailedBalanceRxEq} results in a spatially varying unbinding rate near the domain boundary. We now investigate what happens if we formulate $\km\paren{\vx,\vy\vert\vz}$ to have a uniform unbinding rate, $\km(\vz) = \mum$ at all points within $\Omega$, and then choose the association kernel, $\kp$, via~\eqref{eq:detailedBalanceRxEq}.
  
For simplicity, we will restrict attention to a version of the Doi model with a simplified placement kernel for the products of the $\textrm{C} \to \textrm{A} + \textrm{B}$ reaction, but expect our analysis could be adapted to more general placement kernels and the SCK model too. We take $\gamma = 0$ in the standard Doi model, and normalize the placement kernel to guarantee particles are always successfully placed within the domain:
\begin{equation*}
  \km\paren{\vx,\vy\vert\vz} = \mum \frac{\ind_{B_{\rb}(\vO)}(\vx-\vz)}{\abs{B_{\rb}(\vz) \cap \Omega}} \delta \paren{\vy - \vz}.
\end{equation*}
This model corresponds to placing the $\vx$ particle uniformly within the portion of the ball of radius $\rb$ about $\vz$ that is within the domain, and then placing the $\vy$ particle at the position of the $\vz$ particle. Sufficiently far from boundaries it is consistent with the kernels used in~\cite{Donevetal2018}. Note that 
\begin{equation*}
  \km(\vz) = \int_{\Omega^2} \km\paren{\vx,\vy\vert\vz} \, d\vx \, d\vy = \mum
\end{equation*}
with these choices so that unbinding is always successful.

Applying~\eqref{eq:detailedBalanceRxEq}, we then have that 
\begin{equation*}
  \kp\paren{\vz\vert\vx,\vy} = \frac{\mum}{\Kd} \frac{\ind_{B_{\rb}(\vO)}(\vx-\vz)}{\abs{B_{\rb}(\vz) \cap \Omega}} \delta \paren{\vy - \vz},
\end{equation*}
so that the probability per time an $\textrm{A}$ and $\textrm{B}$ molecule can react is
\begin{align*}
  \kp(\vx,\vy) &= \frac{\mum}{\Kd} \int_{\Omega} \frac{\ind_{B_{\rb}(\vO)}(\vx-\vz)}{\abs{B_{\rb}(\vz) \cap \Omega}} \delta \paren{\vy - \vz} \, d\vz \\
  &= \frac{\mum}{\Kd} \frac{\ind_{B_{\rb}(\vO)}(\vx-\vy)}{\abs{B_{\rb}(\vy) \cap \Omega}}.
\end{align*}
Here the effective probability per time two particles can react when separated by $\rb$ or less is $\mum / (\Kd \abs{B_{\rb}(\vy) \cap \Omega})$, which will \emph{increase} for $\vy$ sufficiently close to the boundary.

It is therefore possible to achieve a uniform unbinding rate using a standard product placement model, however, we see that we then require an increased pointwise association rate in the vicinity of the domain boundary for the underlying particle model to be consistent with detailed balance holding.

\subsection{Numerical Examples} \label{sect:numerics}
As a simple illustration of how preserving detailed balance near domain boundaries can impact model predictions, we now consider Brownian Dynamics (BD) simulations of the standard Doi VR model of Section~\ref{sect:VR_DB}, using both our rejection mechanism that is consistent with detailed balance holding, and the mirror-like reflection mechanism of Fig.~\ref{fig:Fig2}C. Our basic BD algorithm is summarized in Algorithm~\ref{alg:BD} of Appendix~\ref{sect:numericsappendix}.

\begin{table}[tb] 
\caption{Parameters for Brownian Dynamics (BD) Simulations} 
\label{tab:tab1} 
\begin{center} 
  \begin{tabular}{c|lc} 
    \hline 
    Parameter & Description & Value\\ \hline
    $D$           & diffusion coefficient & 1 $\mu$m$^2$ s$^{-1}$\\ 
    $\lambda$     & association rate & varies, s$^{-1}$,\\
    & &see Appendix~\ref{sect:numericsappendix}\\
    $\rb$          & reaction-radius & $10$ nm \\
    $\mum$        & dissociation rate & $17.3$ s$^{-1}$ \\
    $\brac{C_0}$  & initial \textrm{C} concentration  & $1.25 \times 10^{-5}$ nm$^{-3}$\\ 
    $\brac{A_0}, \brac{B_0}$ & initial \textrm{A} and \textrm{B} concentrations   & 0 nm$^{-3}$\\ 
    $L$       & domain length & varies, nm \\
    $\Omega$ & domain & $(0,L)^3$ or \\
    & &$L\times L \times 30 \text{ nm}$ \\
    $C_{\textrm{max}}$ & maximum number of \textrm{C} particles & $\brac{C_0} \abs{\Omega}$ \\
    $\Delta t$ & BD step size & $10^{-8}$ s \\
    $T$      & final time & $.5$ s
\end{tabular}
\end{center} \end{table}

We consider the reversible $\textrm{A} + \textrm{B} \leftrightarrows \textrm{C}$ reaction in three cubic domains of increasing size, $\Omega = (0,L)^3$ with $L \in \{20 \text{ nm}, 100 \text{ nm}, 200 \text{ nm}\}$. Particles were initialized in the bound state, i.e. as \textrm{C} particles, with a fixed initial concentration of $\brac{C_0} = 1.25 \times 10^{-5} (\text{nm})^{-3}$ that was initially uniformly distributed across $\Omega$. This corresponded to $C_{\textrm{max}} = 1$ particle for $L = 20 \text{ nm}$, $C_{\textrm{max}} = 125$ particles for $L = 100 \text{ nm}$, and $C_{\textrm{max}} = 1000$ particles for $L = 200 \text{ nm}$. All boundaries were treated as reflecting.

For the smallest domain we chose $\lambda = 40.5745 \text{s}^{-1}$ so that by~\eqref{eq:ssprobs} the steady-state probability of being in the bound state was $0.5512$ when using the rejection model. Let $\avg{C(t)}$ denote the average number of \textrm{C} molecules within the system at time $t$, with 
\begin{equation} \label{eq:Cfrac}
  f(t) = \frac{\avg{C(t)}}{C_{\textrm{max}}}
\end{equation}
denoting the average fraction of bound \textrm{A} (or \textrm{B}) particles in the system at $t$. As the domain size was increased, $\lambda$ was increased so that the steady state value of $f(t)$ in a corresponding well-mixed chemical master equation (CME) model was held constant at $.551$ (i.e. three digits of accuracy). In the well-mixed CME model the association rate for the reaction was given by the fast diffusion limit~\cite{IsaacsonCRDME2013,ErbanChapman2009} $\lambda \tfrac{4}{3}\pi \rb^3$, with a dissociation rate of $\mum$. Our method for calculating $\lambda$ from the steady-state value of $\lim_{t \to \infty} f(t) = .551$ in the well-mixed CME is summarized in Appendix~\ref{sect:numericsappendix}. All simulations used a timestep of $\Delta t = 10^{-8} \text{ s}$, which is also discussed in Appendix~\ref{sect:numericsappendix}. All other parameters are given in Table~\ref{tab:tab1}.

\begin{figure*}
  \includegraphics[width=\textwidth]{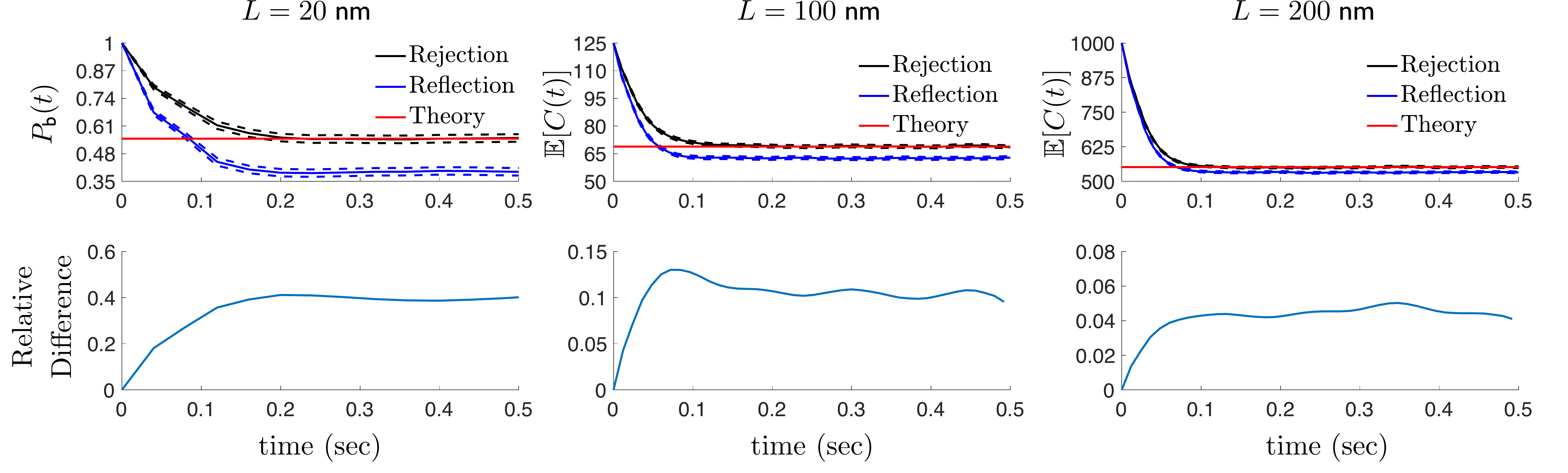}
  \caption{How the average number of \textrm{C} particles changes when
  using the detailed balance preserving rejection method versus the reflection
  method for dissociation reactions. See Section~\ref{sect:numerics} for
  details, and both Table~\ref{tab:tab1} and
  Appendix~\ref{sect:numericsappendix} for parameter values. The domains were
  taken to be cubes of width $L$ with reflecting boundaries. Initially
  $C_{\textrm{max}}$ particles of species \textrm{C} were distributed uniformly
  throughout the domain. Each curve is an average from 11,000 Brownian Dynamics
  simulations of the $\textrm{A} + \textrm{B} \leftrightarrows \textrm{C}$
  reaction using the method described in Appendix~\ref{sect:numericsappendix}.
  Dashed lines give 95\% confidence intervals. Solid red lines represent the
  analytical value for $\bar{P}_{\textrm{b}}$ (upper left panel)
  from~\eqref{eq:ssprobs}, and the steady-state value of $\avg{C(t)}$ from a
  well-mixed chemical master equation (CME) model (upper middle and right
  panels). Both $\bar{P}_{\textrm{b}}$ and the CME steady-state for
  $\avg{C(t)}/C_{\textrm{max}}$ were fixed at $.551$, with the Doi association
  probability per time, $\lambda$, chosen to give this steady-state value in the
  well-mixed CME model, see Appendix~\ref{sect:numericsappendix}.}
  \label{fig:Fig3}
\end{figure*}

In the upper panels of Figure~\ref{fig:Fig3} we compare the average number of \textrm{C} molecules in the system, $\avg{C(t)}$, using both the detailed balance preserving rejection method and the reflection method to handle \textrm{A} or \textrm{B} particles that end up outside the domain when the $\textrm{C} \to \textrm{A} + \textrm{B}$ reaction occurs. In the figures the curve labelled "Theory" refers to the corresponding value from the well-mixed CME model to which $\lambda$ was calibrated, i.e. $.551 C_{\textrm{max}}$. Note that $P_{\textrm{b}}(t) = \avg{C(t)}$ in the case that $C_{\textrm{max}}=1$ (i.e. when $L=20 \text{ nm}$). Here we used the same parameters for each method, and estimated $\avg{C(t)}$ by averaging over 11,000 simulations. We see clear differences between the steady-state value of $\avg{C(t)}$ obtained by the reflection method (solid blue line) and the chemical master equation value (solid red line) for small domain sizes, while the detailed balance preserving rejection method (solid black line) always matches the well-mixed steady-state value. As the simulation domain size is increased, thereby reducing the surface to volume ratio of the domain and frequency of dissociation reactions occurring near boundaries, the discrepancy between the reflection method and the CME steady-state value decreases. The bottom row of panels in Fig.~\ref{fig:Fig3} show the relative difference of the curves obtained by the reflection method relative to that of the rejection method.

\begin{figure}[tb]
  \includegraphics[width=.4\textwidth]{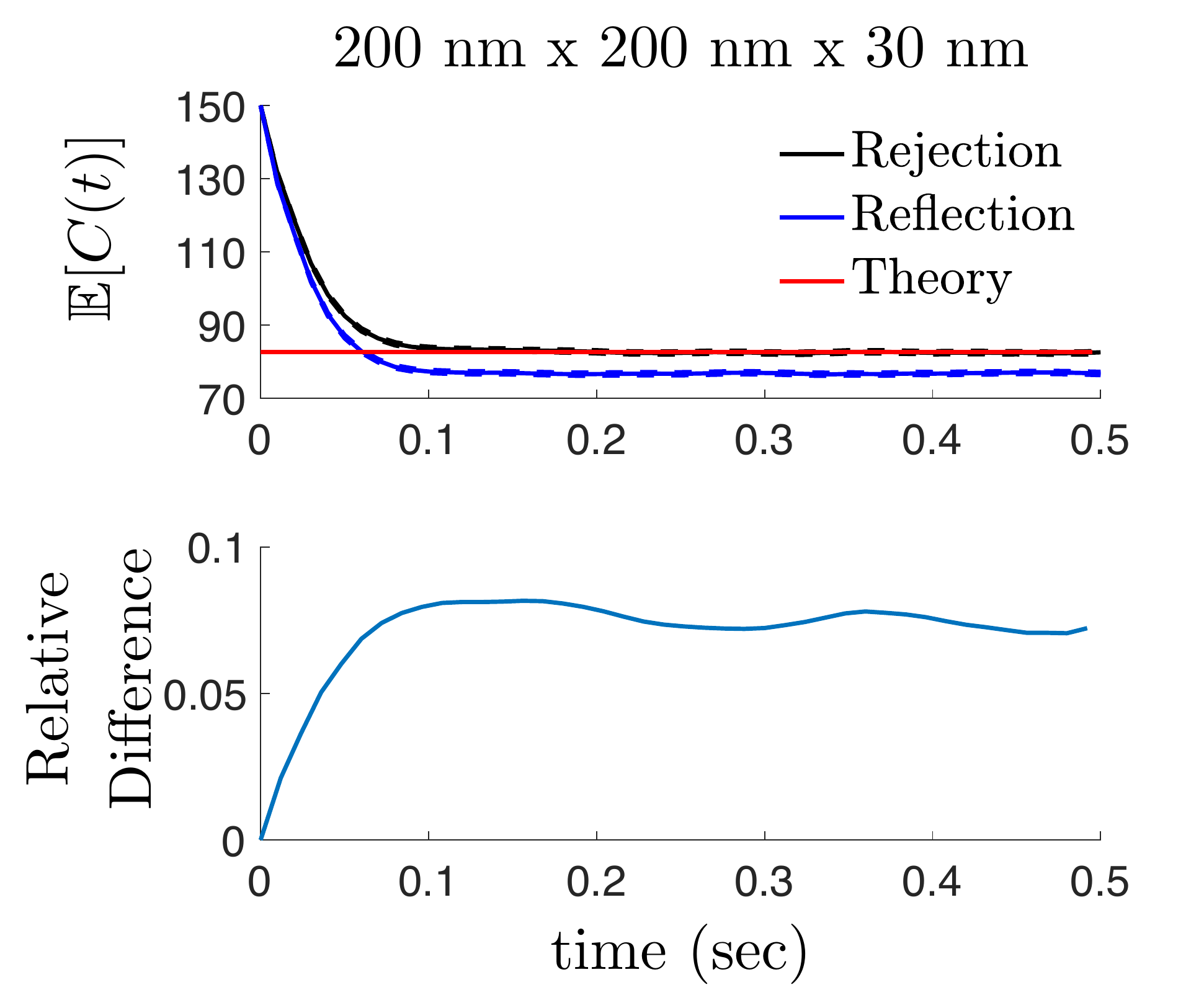}
  \caption{How the average number of \textrm{C} particles changes when using the detailed balance preserving rejection method versus the reflection method for dissociation reactions in a box of size $200 \times 200 \times 30$ nm. $C_{\textrm{max}} = 150$ particles of type \textrm{C} were initially uniformly distributed within the domain. All other parameters are the same as described in Fig.~\ref{fig:Fig3}.}
  \label{fig:Fig4}
\end{figure}

How strongly the reflection-based approach disagrees from the rejection-based approach is a complicated function of reaction and transport parameters, particle densities, possible reactions, and domain geometry. For example, in Fig.~\ref{fig:Fig4} we consider the same comparison as in Fig.~\ref{fig:Fig3}, but in a domain of size $200 \text{ nm} \times 200 \text{ nm}\times 30\text{ nm}$, where the width of $30 \text{ nm}$ is comparable to the width of the interior of yeast endoplasmic reticulum sheets~\cite{ERThickness2010}. We see a clear increase in the relative difference between the rejection and reflection approaches compared to the cubical domain of width $200 \text{ nm}$. 

\section{Discussion}
\begin{figure*}[tb]
  \includegraphics[width=.9\textwidth]{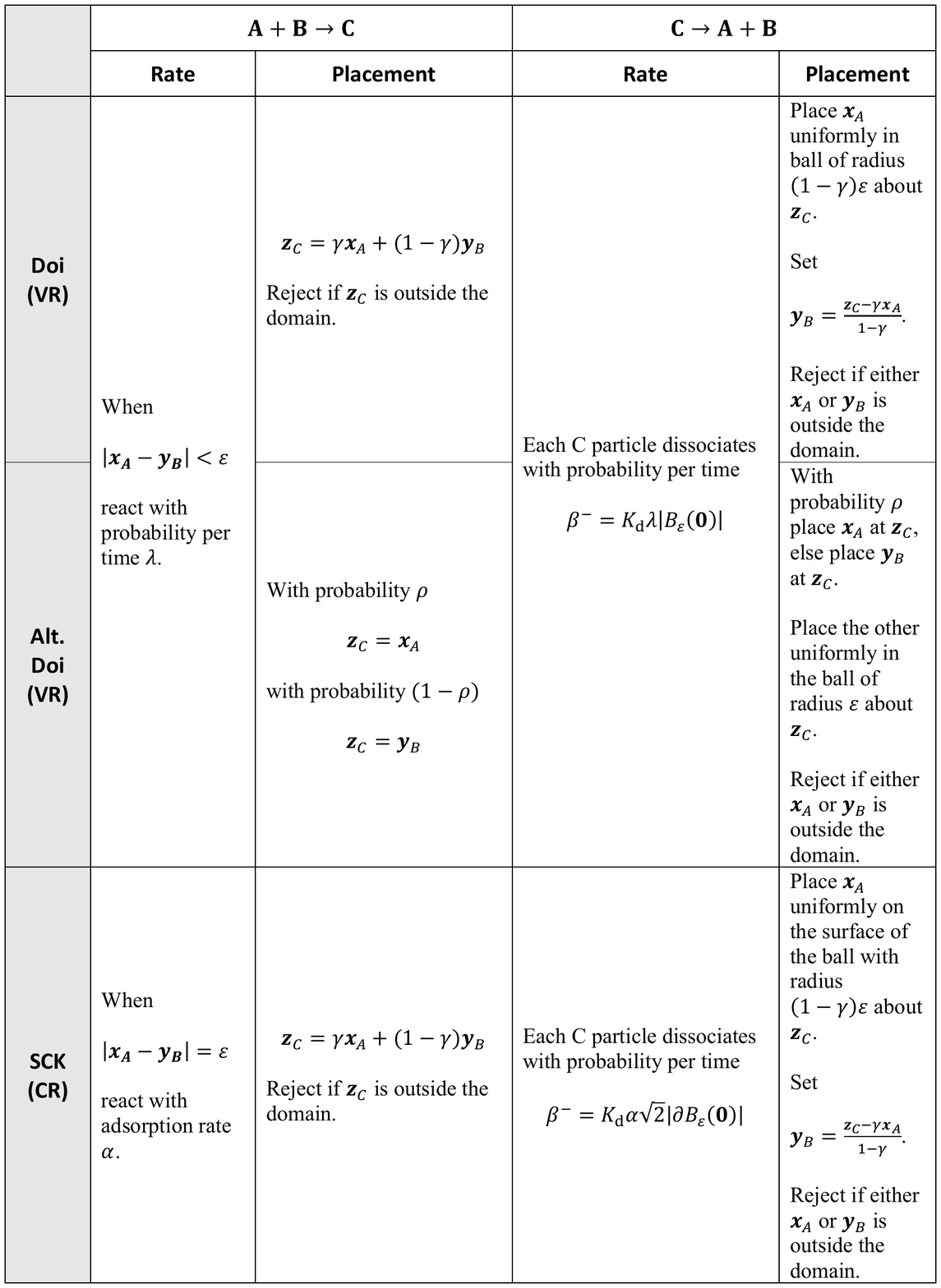}
  \caption{Table of Doi (VR) and SCK (CR) variants that are consistent with detailed balance holding. Here $\vx_{\textrm{A}}$, $\vy_{\textrm{B}}$, and $\vz_{\textrm{C}}$ denote the positions of the \textrm{A}, \textrm{B}, and \textrm{C} particles involved in the reaction as reactants and/or products.}
  \label{fig:Fig5}
\end{figure*}
Methods for preserving detailed balance in simulations of particle-based stochastic reaction-diffusion models have been considered recently in a variety of contexts~\cite{MorellitenWold2008,Donevetal2018,FrohnerNoe2018}. These works focus on reactions in freespace or periodic domains, raising the question of how such schemes should work for reactions occurring near (reflecting) domain boundaries. For the general volume reactivity and contact reactivity models in a closed domain with reflecting boundaries, we have formulated a pointwise detailed balance condition for the reversible $\textrm{A} + \textrm{B} \leftrightarrows \textrm{C}$ reaction, and illustrated how enforcing detailed balance offers guidelines for the placement of reaction products in simulations. 
 
When using common variants of the Doi VR and Smoluchowski-Collins-Kimball CR models in closed domains, it is a modeling choice how reaction products should be placed for reactants close to the boundaries. We demonstrated that for common choices of association kernels, preserving pointwise detailed balance requires a decreased and spatially varying unbinding / dissociation rate near domain boundaries (even in convex domains). Our work provides one simple approach to realize this rate in typical Brownian Dynamics and lattice simulation algorithms; initially ignoring the domain boundary when placing reaction product(s), and then rejecting unbinding events where one of the products was placed outside the domain. One benefit to this approach is that no modification to the underlying reaction kernels used for the binding and unbinding process is needed for common variants of the Doi and SCK models, or for recently proposed versions that have been shown to satisfy detailed balance in periodic domains~\cite{Donevetal2018}. Note, some lattice jump-process simulation methods can trivially enforce (pointwise) detailed balance of spatial reaction fluxes for such VR kernels by appropriate choice of dissociation (or association) transition rates~\cite{IsaacsonZhang17}. In Figure~\ref{fig:Fig5} we summarize several options for standard Doi (VR) and SCK (CR) reaction rate and product placement models which are consistent with detailed balance holding in a general bounded domain with reflecting boundaries.

We note that other approaches for handling product placement near domain boundaries, such as the boundary reflection model, can preserve a spatially-uniform unbinding rate for the $\textrm{C} \to \textrm{A} + \textrm{B}$ reaction. However, as we showed this leads to models that violate detailed balance of (pointwise) spatial reaction fluxes in the neighborhood of domain boundaries when using a standard association kernel. We therefore formulated a specific variant of the Doi model unbinding kernel that includes a spatially-uniform unbinding rate, and derived the corresponding detailed balance preserving association kernel. We found that to compensate for the uniform unbinding rate, the effective probability per time two reactants within a reaction-radius can react must be increased in the vicinity of the domain boundary. Though we do not show it here, we expect that a similar increase would occur when modifying other commonly-used Doi and SCK unbinding kernels to support a spatially-uniform unbinding rate. 

\textit{Importance and benefits of preserving detailed balance:}
As mentioned in the introduction, we expect detailed balance of (pointwise) spatial reaction fluxes to hold at all points within the domain from both time reversibility of more microscopic models, and from statistical mechanics arguments~\cite{Donevetal2018}. Analogous to how we  want reaction models to preserve mass at a population level in the $\textrm{A} + \textrm{B} \leftrightarrow \textrm{C}$ reaction, it seems desirable to have PBSRD models with reaction kernels that preserve detailed balance. In addition, as illustrated by~\eqref{eq:VR_Kd}, when detailed balance is known to be satisfied, equilibrium dissociation constants can be used in estimating (microscopic) reaction parameters, eliminating the need to directly parametrize one of the microscopic association or dissociation rates (which can be more difficult to measure). Preserving detailed balance at the particle level should also ensure that it is preserved in more macroscopic models that can be derived from PBSRD models. For example, the corresponding large-population limit of the VR model gives macroscopic mean-field partial integral differential equation (PIDE) models for deterministic concentration fields~\cite{IsaacsonMaSpiliopoulosSIMA22,IsaacsonMaSpiliopoulosSIAP22}. We would expect detailed balance being satisfied at the PBSRD level to imply it is also satisfied in the derived PIDEs.

Whether preserving detailed balance near domain boundaries will significantly impact model predictions is likely to be a model-specific question that depends on reaction parameters, transport parameters, domain shape, and the concentration of chemical species. While it is beyond the scope of the current work to systematically explore all these different degrees of freedom, we can make some general statements. If one is interested in dynamics near the domain boundary, for example interactions between cytosolic proteins and membrane bound/tethered proteins or receptor tails~\cite{Zhang2019}, then how boundary interactions are handled may become important in model predictions. Likewise, when trying to resolve chemical dynamics within a tortuous space like the inside of the cystosol or endoplasmic reticulum, where boundary surfaces can block significant portions of the domain, it may be that the choice of how reaction kernels handle boundary interactions can significantly impact PBSRD model predictions. 

Similarly, our numerical examples show that if one is interested in reaction dynamics within a domain that is comparable in size to interaction distances, then boundary effects would be expected to play a role in model predictions, potentially giving altered dissociation/equilibrium constants in methods that do not preserve detailed balance. In contrast, as demonstrated by our larger cubical domain example, boundary effects are less important when studying reaction processes within large, open regions with lower surface to volume ratios. We believe it would be an interesting future study to investigate the impact of preserving versus violating pointwise detailed balance near boundaries for varying choices of reaction kernels, physical parameters (reaction rates, diffusivities, interaction distances), species concentrations, reaction networks, reaction localizations, and domain geometries.

Finally, we note that the relative importance of preserving a constant unbinding rate, preserving a constant association rate for molecules that are appropriately separated, and preserving detailed balance at all points in a domain, may depend on the underlying physical process being approximated. If preserving all three is desirable for a given model, alternative choices for product placement kernels in association or dissociation reactions appear to be needed.

\textit{Additional Future work:}
Here we have only discussed detailed balance conditions for (point) particle-based stochastic reaction-diffusion models, focusing on a perspective of adapting unbinding kernels to preserve detailed balance for specified functional forms of association kernels. An interesting future direction would be to more carefully consider families of unbinding kernels that allow for a uniform unbinding rate, and investigate if any give corresponding detailed balance-preserving association kernels in which the probability per time sufficiently close reactants can react is spatially-uniform.

This work also only analytically studied the two-particle $\textrm{A} + \textrm{B} \leftrightarrows \textrm{C}$ reaction; it would be helpful to analytically confirm our expectation that our results should be fully consistent with detailed balance holding in the multi-particle case. In addition, we treated molecules as point particles, ignoring other physical effects that can be important in some contexts. These include volume exclusion and other finite size effects that may be important in more dense systems~\cite{KleinSchwarz2014}, along with potential interactions between particles~\cite{FrohnerNoe2018}. We hope to report on more general multiparticle systems in future work.

Finally, we note that it has recently been demonstrated that PBSRD models predict reduced macroscopic reaction rates for molecules localized near surface boundaries~\cite{AndrewsSurfEffects20}. This suggests an analogous future problem to study; whether more microscopic Langevin dynamics or molecular dynamics models might suggest explicit modifications to common PBSRD reaction kernels near boundary surfaces.

\section{Acknowledgments}
Both authors' work was supported by National Science Foundation award DMS--1902854. SAI was partially supported by a grant from the Simons Foundation, and thanks the Isaac Newton Institute of Mathematical Sciences for hosting him as a visiting Simons Fellow for the program on Stochastic Dynamical Systems in Biology during a portion of the period when this work was carried out. SAI thanks Aleksander Donev, David Isaacson, Peter Kramer, Charles S. Peskin, and Konstantinos Spiliopoulos for helpful discussions about detailed balance in stochastic; and classical, quantum, and statistical mechanical systems. 

\appendix

\section{PROOF OF (\ref{eq:lambdaMultParticleCR})}
\label{app:proofOfEqCR}
In this section, we prove \eqref{eq:lambdaMultParticleCR}.
\begin{equation*}
  \delta_{\dRset}(\vx,\vy) = \sqrt{2}\delta_{\partial B_{\rb}(\vO)}(\vx - \vy).
\end{equation*}
\begin{proof}
  Let $v(\vx,\vy)$ be an arbitrary test function with compact support. We define a change of variables
  \begin{equation}
      \vR = \vx - \vy, \,\,\,\,\,\,\,\,\,\, \vw = \vx + \vy.
    \label{eq:cov}
  \end{equation} 
  To prove \eqref{eq:lambdaMultParticleCR}, we subsequently prove
  \begin{equation}
    \int_{\dRset}v(\vx, \vy) \, dS(\vx, \vy) = \sqrt{2}\int_{\R^d}\int_{\partial B_{\rb}(\vO)}v(\vx(\vR, \vw), \vy(\vR, \vw))\, dS(\vR) \, d\vw.
    \label{eq:IdToProof}
  \end{equation}
  
  By the definition of the delta surface measure, we begin by rewriting the integral as
   \begin{equation}
       \int_{\R^{2d}}v(\vx, \vy)\delta_{\dRset}(\vx,\vy) \, d\vx \, d\vy = \int_{\dRset}v(\vx, \vy) \, dS(\vx, \vy).
       \label{eq:defnDeltaSurf}
   \end{equation}
  We let $\phi(\vx, \vy) = \abs{\vx-\vy}$, which gives $\abs{\nabla\phi(\vx,\vy)} = \sqrt{2}$. Applying the co-area formula we obtain
    \begin{equation}
      \begin{aligned}
        \int_{\dRset}&v(\vx,\vy) \, dS(\vx,\vy) \\
        &= \int_{\{\abs{\vx-\vy} = \rb\}}v(\vx,\vy) \, dS(\vx,\vy), \\
        &= \int_{\phi^{-1}(\rb)}v(\vx,\vy) \, dS(\vx,\vy),\\
        &= \int_{\R^{2d}}v(\vx,\vy)\delta(\phi(\vx,\vy)-\rb)\abs{\nabla_{\vx,\vy}\phi(\vx,\vy)} \, d\vx \, d\vy, \\
        &= \sqrt{2}\int_{\R^{2d}}v(\vx,\vy)\delta(\phi(\vx,\vy)-\rb) \, d\vx \, d\vy.
      \end{aligned}
      \label{eq:coareaI}
    \end{equation}
  By a change of variables \eqref{eq:cov} we can rewrite the last integral in \eqref{eq:coareaI} as
    \begin{equation}
      \int_{\R^{2d}}v(\vx,\vy)\delta(\phi(\vx,\vy)-\rb) \, d\vx \, d\vy = \frac{1}{2^d}\int_{\R^{2d}}v(\vx(\vR, \vw), \vy(\vR, \vw))\delta\paren{\phi(\vx(\vR, \vw), \vy(\vR, \vw))-\rb} \, d\vR \, d\vw,
      \label{eq:covIntEqn}
    \end{equation}
    where $\phi(\vx(\vR, \vw), \vy(\vR, \vw)) = \phi(\vR) = \abs{\vR}$, which gives $\abs{\nabla_{\vR}\phi(\vR)} = 1$. Applying the co-area formula again we have
    \begin{equation}
      \begin{aligned}
        \int_{\R^{2d}}&v(\vx(\vR, \vw), \vy(\vR, \vw))\delta\paren{\phi(\vx(\vR, \vw), \vy(\vR, \vw))-\rb} \, d\vR \, d\vw \\
        &= \int_{\R^d}\int_{\phi^{-1}(\rb)}\frac{v(\vx(\vR, \vw), \vy(\vR, \vw))}{\abs{\nabla_{\vR}\phi(\vR)}} \, dS(\vR) \, d\vw, \\
        &= \int_{\R^d}\int_{\partial B_{\rb}(\vO)}v(\vx(\vR, \vw), \vy(\vR, \vw))\, dS(\vR) \, d\vw \\
        &= \int_{\R^{2d}} \delta_{\partial B_{\rb}(\vO)}(\vR) v(\vx(\vR, \vw), \vy(\vR, \vw))\, d\vR \, d\vw \\
        &= 2^{d} \int_{\R^{2d}} \delta_{\partial B_{\rb}(\vO)}(\vx - \vy) v(\vx, \vy)\, d\vx \, d\vy.
      \end{aligned}
      \label{eq:coareaII}
    \end{equation}
  Equation~\eqref{eq:coareaI} together with \eqref{eq:covIntEqn} and \eqref{eq:coareaII} imply that
    \begin{equation*}
      \int_{\dRset}v(\vx,\vy) \, dS(\vx,\vy) = \sqrt{2}\int_{\R^{2d}} \delta_{\partial B_{\rb}(\vO)}(\vx - \vy) v(\vx, \vy)\, d\vx \, d\vy,
    \end{equation*}
  so that we have
  \begin{equation*}
    \delta_{\dRset}(\vx, \vy) = \sqrt{2}\delta_{\partial B_{\rb}(\vO)}(\vx - \vy).
  \end{equation*}
\end{proof}



\section{PROOF OF (\ref{eq:deltasurfmeasidentity})} \label{app:proofOfDeltaSurfScaling}
In this appendix we show that as a function of $\vx$ the following identity holds
\begin{equation} \label{eq:deltasurfmeasidentityAp}
\delta_{\dB_{\rb}(\vO)}\paren{\frac{\vx-\vz}{\zeta}} = \zeta \delta_{\dB_{\zeta\rb}(\vz)}(\vx)
\end{equation}
for $\zeta > 0$.
\begin{proof}
Using the change of variables
\begin{align*}
\vy = \frac{\vx - \vz}{\zeta},
\end{align*}
the action of the surface delta-function against a test function $\phi(\vx)$ is then
\begin{align*}
\int_{\R^d} \delta_{\dB_{\rb}(\vO)}\paren{\frac{\vx-\vz}{\zeta}}  &\phi(\vx) \, d\vx \\
&= \zeta^{d} \int_{\R^d} \delta_{\dB_{\rb}(\vO)} (\vy) \phi(\zeta \vy + \vz) \, d\vy \\
&= \zeta^{d} \int_{\dB_{\rb}(\vO)} \phi(\zeta \vy + \vz) dS(\vy) \\
&= \zeta^{d} \int_{\dB_{\rb}(\vz)} \phi(\zeta \vy) dS(\vy) \\
&= \zeta \int_{\dB_{\zeta \rb}(\vz)} \phi(\vx) dS(\vx) \\
&= \zeta \int_{\R^{d}} \delta_{\dB_{\zeta\rb}(\vz)}(\vx) \phi(\vx) \, d\vx.
\end{align*}
\end{proof}
Note, an immediate corollary is that
\begin{equation} \label{eq:deltasurfmeasidentityAp2}
\delta_{\dB_{\rb}(\vO)}\paren{\frac{\vx-\vz}{\zeta}} = \zeta \delta_{\dB_{\zeta\rb}(\vO)}(\vx-\vz).
\end{equation}


\section{Equivalence of Weak Forms of the SCK CR Model}
\label{sect:WeakFormContactRx}
In this appendix we show that our generalized model with $\delta$ surface-measure coefficients is consistent in weak form~\cite{EvansPDEs2ed,SchussStochProcBook2010} with the standard representation of the SCK CR model as a system of PDEs with reactive boundary conditions. 

We begin by considering the probability density of the unbound state in our formulation of the SCK CR model
\begin{equation}
\begin{aligned} 
   \ind_{\Omega^2\setminus\Rset\cup\dRset}(\vx,\vy)\PD{p}{t}(\vx,\vy,t) 
  &= \nabla_{\vx,\vy} \cdot \ind_{\Omega^2\setminus\Rset\cup\dRset}(\vx,\vy)\Dmat \nabla_{\vx,\vy} p(\vx,\vy,t) \\
  &\phantom{==} - \alpha \delta_{\dRset}(\vx,\vy) \ind_{\Omega}(\gamma\vx+(1-\gamma)\vy) p(\vx,\vy,t) \\
  &\phantom{==} + \mum\int_{\Omega} \frac{\delta_{\partial B_{(1-\gamma)\epsilon}(\vO)}(\vx-\vz)}{\abs{\partial B_{(1-\gamma)\rb}(\vO)}}\delta\paren{\vy - \frac{\vz-\gamma\vx}{1-\gamma}} \pb(\vz,t) \, d\vz,
\end{aligned}
\label{eq:Smoluchowskirho}
\end{equation}
with the reflecting zero Neumann boundary condition
\begin{equation}
    \nabla_{\vx,\vy}p(\vx,\vy,t)\cdot\veta(\vx,\vy) = 0, \quad (\vx,\vy) \in \partial (\Omegaf^2).
    \label{eq:SCKCR_NoFluxBC}
\end{equation} 
Here $\partial (\Omegaf^2)$ corresponds to the portion of $\partial \Omega \times \partial \Omega$ that is outside $\mathcal{R} \cup \partial \mathcal{R}$, and $\veta(\vx,\vy)$ denotes the unit outward normal vector to $\partial (\Omegaf^2)$ at $(\vx,\vy)$. We assume the probability density the two particles are initially unbound is zero within the reaction surface and its interior, ensuring the molecules never approach closer than $\varepsilon$. That is, we assume  $p(\vx,\vy,0) = 0$ for $(\vx,\vy) \in \Omega^{2} \cap (\Rset \cup \partial \Rset)$.

Denote by $\mathcal{V}$ a space of test functions, with $v(\vx,\vy) \in \mathcal{V}$. To obtain the weak form of~\eqref{eq:Smoluchowskirho}, we multiply by $v(\vx,\vy)$ and integrate both sides over $(\vx, \vy) \in \Omega^2$. We now simplify each term above one by one. The time-derivative term of~\eqref{eq:Smoluchowskirho} becomes
\begin{equation}
  \int_{\Omega^2}\ind_{\Omega^2\setminus\Rset\cup\dRset}(\vx,\vy) \PD{p}{t}(\vx,\vy,t)v(\vx,\vy) \, d\vx \, d\vy = \int_{\Omega^2\setminus\Rset\cup\dRset}\PD{p}{t}(\vx,\vy,t)v(\vx,\vy) \, d\vx \, d\vy.
  \label{eq:rhoLHS}
\end{equation}
The diffusion term of~\eqref{eq:Smoluchowskirho} is
\begin{multline} \label{eq:rhoRHS1}
    \int_{\Omega^2}\nabla_{\vx,\vy} \cdot \brac{\ind_{\Omega^2\setminus\Rset\cup\dRset}(\vx,\vy)\Dmat \nabla_{\vx,\vy}p(\vx,\vy,t)}v(\vx,\vy) \, d\vx \, d\vy \\
    = -\int_{\Omega^2\setminus\Rset\cup\dRset}\nabla_{\vx,\vy}v(\vx,\vy) \cdot\brac{\Dmat \nabla_{\vx,\vy}p(\vx,\vy,t)} \, d\vx \, d\vy.
\end{multline}
The association reaction term of~\eqref{eq:Smoluchowskirho} is
\begin{multline}
  \int_{\Omega^2}\alpha \delta_{\dRset}(\vx,\vy) \ind_{\Omega}(\gamma\vx+(1-\gamma)\vy) p(\vx,\vy,t) v(\vx,\vy) \, d\vx \, d\vy\\ 
  = \alpha\int_{\Omega^2\cap\dRset}\ind_{\Omega}(\gamma\vx+(1-\gamma)\vy) p(\vx,\vy,t) v(\vx,\vy) \, dS(\vx,\vy).
  \label{eq:rhoRHS2}
\end{multline}
Let $m_{\gamma}(\vx,\vy) = \gamma \vx + (1 - \gamma) \vy$ denote the position a newly produced $\textrm{C}$ particle is placed at. The dissociation term of \eqref{eq:Smoluchowskirho} is then
\begin{equation} \label{eq:rhoRHS3}
  \begin{aligned}
    \mum\int_{\Omega^2}&\brac{\int_{\Omega}\frac{\delta_{\partial B_{(1-\gamma)\rb}(\vO)}(\vx-\vz)}{\abs{\partial B_{(1-\gamma)\rb}(\vO)}}\delta\paren{\vy - \frac{\vz-\gamma\vx}{1-\gamma}}\pb(\vz,t)\,d\vz}\, v(\vx,\vy)\,d\vx \, d\vy \\
    &= \mum \tfrac{\paren{1-\gamma}^d}{{\abs{\partial B_{(1-\gamma)\rb}(\vO)}}} \int_{\Omega^2} \delta_{\partial B_{(1-\gamma)\rb}(\vO)}((1-\gamma)(\vx-\vy))\pb(m_{\gamma}(\vx,\vy),t) \\
    &\quad\quad\quad\quad\quad\quad\quad\quad\quad\quad\quad\quad\quad\quad\quad\quad \times \ind_{\Omega}\paren{m_{\gamma}(\vx,\vy)} v(\vx,\vy)\,d\vx \, d\vy \\
    &= \frac{\mum}{\abs{\partial B_{\rb}(\vO)}}\int_{\Omega^2}\delta_{\partial B_{\rb}(\vO)}(\vx-\vy)\pb\paren{m_{\gamma}(\vx,\vy),t} \ind_{\Omega}(m_{\gamma}(\vx,\vy)) v\paren{\vx,\vy}\,d\vx \, d\vy \\
    &= \frac{\mum}{\sqrt{2}\abs{\partial B_{\rb}(\vO)}}\int_{\Omega^2\cap\dRset}\pb\paren{m_{\gamma}(\vx,\vy),t}\ind_{\Omega}(m_{\gamma}(\vx,\vy))v\paren{\vx,\vy}\,dS(\vx,\vy). \\
  \end{aligned}  
\end{equation}
Here, in the second equality we have used the scaling properties of the surface $\delta$-function shown in the previous appendix, while in the third equality we have used the identity of Appendix~\ref{app:proofOfEqCR}.

Substituting~\eqref{eq:rhoLHS}, \eqref{eq:rhoRHS1}, \eqref{eq:rhoRHS2}, and~\eqref{eq:rhoRHS3} into~\eqref{eq:Smoluchowskirho}, we obtain the weak form of the equation for the probability density of the unbound state as
\begin{equation}
  \begin{aligned}
    \int_{\Omega^2\setminus\Rset\cup\dRset} &\PD{p}{t}(\vx,\vy,t) v(\vx,\vy) \, d\vx \, d\vy \\ 
    =&-\int_{\Omega^2\setminus\Rset\cup\dRset}\nabla_{\vx,\vy}v(\vx,\vy) \cdot\brac{\Dmat \nabla_{\vx,\vy}p(\vx,\vy,t)} \, d\vx \, d\vy \\ 
    &- \alpha\int_{\Omega^2\cap\dRset}\ind_{\Omega}(m_{\gamma}(\vx,\vy)) p(\vx,\vy,t) v(\vx,\vy) \, dS(\vx,\vy) \\ 
    &+ \frac{\mum}{\sqrt{2}\abs{\partial B_{\rb}(\vO)}}\int_{\Omega^2\cap\dRset}\pb(m_{\gamma}(\vx,\vy),t) 
    \ind_{\Omega}\paren{m_{\gamma}(\vx,\vy)}v(\vx,\vy)\, dS(\vx,\vy). 
  \end{aligned}
  \label{eq:WFrho}
\end{equation}

We now derive the corresponding weak form for $p(\vx,\vy,t)$ in the more common PDE with reactive boundary condition representation of the SCK CR model. Note, we abuse notation and again use $p(\vx,\vy,t)$ for the density, as we ultimately derive the same weak-form. The PDE version of the model is
\begin{align*}
    \PD{p}{t}(\vx,\vy,t) &= \nabla_{\vx,\vy} \cdot \brac{\Dmat \nabla_{\vx,\vy}p(\vx,\vy,t)}, \quad \forall (\vx,\vy) \in \Omega^2\setminus\Rset\cup\dRset, 
\end{align*}
with the reactive boundary condition
\begin{equation*}
    -\Dmat\nabla_{\vx,\vy}p(\vx,\vy,t)\cdot\veta(\vx,\vy) = 
    \ind_{\Omega}\paren{m_{\gamma}(\vx,\vy)} 
    \Big[\alpha p(\vx,\vy,t) - \tfrac{\mum}{\sqrt{2}\abs{\partial B_{\rb}(\vO)}} 
    \pb(m_{\gamma}(\vx,\vy),t) \Big],      
\end{equation*}
for $(\vx,\vy) \in \Omega^2 \cap\dRset$, and the reflecting zero Neumann boundary condition \eqref{eq:SCKCR_NoFluxBC}. We note that this representation may appear different than commonly used simplified forms, which are often written in the separation coordinate for the unbound state, see for example~\cite{AgmonSzaboRevRx1990}. There particles are assumed to move in free-space so that the two-particle dynamics in the unbound state can be formulated in terms of their scalar separation $r$ (assuming spherical symmetry). Our SCK model formulation allows all three particles to diffuse in a general domain, and so must track the probability densities to both be in a particular chemical state, and for each particle to be at specified positions. It likewise specifies the boundary condition at each point $(\vx,\vy)$ on the reactive boundary, as opposed to~\cite{AgmonSzaboRevRx1990}, where spherical symmetry allows specifying a boundary condition for the total flux into a reactive sphere (i.e. the probability per time of entering/leaving the bound state). The intrinsic bimolecular association rate constant of~\cite{AgmonSzaboRevRx1990}, $\kappa_a$, is related to our surface adsorption constant, $\alpha$, by
\begin{equation*}
  \kappa_a = \begin{cases}
    4 \sqrt{2} \pi \rb^2 \alpha, & \text{in three dimensions} \\
    2 \sqrt{2} \pi   \rb \alpha, & \text{in two dimensions.}
  \end{cases}
\end{equation*}
Our unbinding rate, $\mum$, should be identical to the intrinsic dissociation rate, $\kappa_d$ of~\cite{AgmonSzaboRevRx1990}.

For any test function $v(\vx, \vy) \in \mathcal{V}$, we have that the weak form of the SCK PDE representation is then
\begin{align*}
  \int_{\Omega^2\setminus\Rset\cup\dRset} \PD{p}{t}(\vx,\vy,t) &v(\vx,\vy) \, d\vx \, d\vy \\
  = &\phantom{-}\int_{\Omega^2\setminus\Rset\cup\dRset}\nabla_{\vx,\vy} \cdot \brac{\Dmat \nabla_{\vx,\vy}p(\vx,\vy,t)}v(\vx,\vy)\, d\vx \, d\vy, \\
    = &-\int_{\Omega^2\setminus\Rset\cup\dRset}\nabla_{\vx,\vy}v(\vx,\vy)\cdot\brac{\Dmat\nabla_{\vx,\vy}p(\vx,\vy,t)} \, d\vx \, d\vy \\
      \phantom{=} &+ \int_{\Omega^2\cap\dRset}v(\vx,\vy)\Dmat\nabla_{\vx,\vy}p(\vx,\vy,t)\cdot\veta(\vx,\vy)\, dS(\vx, \vy), \\
    = &-\int_{\Omega^2\setminus\Rset\cup\dRset}\nabla_{\vx,\vy}v(\vx,\vy)\cdot\Dmat\nabla_{\vx,\vy}p(\vx,\vy,t) \, d\vx \, d\vy \\
      \phantom{=} &- \alpha\int_{\Omega^2\cap\dRset}\brac{\ind_{\Omega}(m_{\gamma}(\vx,\vy))p(\vx,\vy,t)v(\vx,\vy)} \, dS(\vx,\vy) \\
      \phantom{=} &+ \tfrac{\mum}{\sqrt{2}\abs{\partial B_{\rb}(\vO)}}\int_{\Omega^2\cap\dRset}\pb(m_{\gamma}(\vx,\vy),t)\ind_{\Omega}\paren{m_{\gamma}(\vx,\vy)} v(\vx,\vy) \, dS(\vx,\vy), 
\end{align*}
which recovers \eqref{eq:WFrho}.

For the $\textrm{C} \to \textrm{A} + \textrm{B}$ unbinding reaction there are no reactive boundary conditions for $\pb(\vz,t)$, and hence the standard PDE form of the SCK CR model and our representation~\eqref{eq:bothDiffuseEqsrhob} are consistent for the dynamics of $\pb(\vz,t)$ (up to rewriting integration regions through evaluation of $\delta$-functions).

\section{Detailed Balance Leads to Rejection of Some Unbinding Events Near Boundaries for the Alternative VR Model}
\label{sect:DBrejectionAltVR}
\begin{theorem}
  \label{thm:plcmProbAltVR}
  Consider the detailed balance preserving unbinding kernel~\eqref{eq:bwdRxAltVR} for the alternative VR model~ \cite{Donevetal2018}. Denote the unbinding rate constant by $\mum$. For a C molecule at $\vz$ in $\Omega$
  \begin{equation} 
    \int_{\Omega^2}\km(\vx,\vy\vert\vz) \, d\vx \, d\vy < \mum
    \label{eq:AltVRPlcmProb}
  \end{equation}
  when $\vz$ is within $\rb$ of $\partial\Omega$. Similarly,
  \begin{equation} 
    \int_{\Omega^2}\km(\vx,\vy\vert\vz) \, d\vx \, d\vy = \mum
    \label{eq:AltVRPlcmProb_away_from_bd}
  \end{equation}
  if $\vz$ is at least $\rb$ away from $\partial\Omega$.
\end{theorem}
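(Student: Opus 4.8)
The plan is to evaluate $\km(\vz) = \int_{\Omega^2}\km(\vx,\vy\vert\vz)\,d\vx\,d\vy$ directly from the unbinding kernel~\eqref{eq:bwdRxAltVR} and show it equals $\mum\,\abs{B_{\rb}(\vz)\cap\Omega}/\abs{B_{\rb}(\vO)}$, after which both claims follow from a short geometric comparison. The structure mirrors the proof of Theorem~\ref{thm:plcmProb} for the standard Doi model, but is in fact simpler: the alternative VR kernel places one product exactly at $\vz$ through a factor $\delta(\vx-\vz)$ or $\delta(\vy-\vz)$, rather than at a reflected position, so no dilated set analogous to $\hat{\Omega}_{\gamma}(\vz)$ is needed.

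First I would integrate the two bracketed terms of~\eqref{eq:bwdRxAltVR} separately. In the $\rho$-term, integrating over $\vx$ collapses $\delta(\vx-\vz)$ (with $\vz\in\Omega$ by assumption), leaving $\mum\rho\int_{\Omega}\ind_{B_{\rb}(\vO)}(\vy-\vz)\,d\vy/\abs{B_{\rb}(\vO)} = \mum\rho\,\abs{B_{\rb}(\vz)\cap\Omega}/\abs{B_{\rb}(\vO)}$. Symmetrically, in the $(1-\rho)$-term, integrating over $\vy$ collapses $\delta(\vy-\vz)$ and yields $\mum(1-\rho)\,\abs{B_{\rb}(\vz)\cap\Omega}/\abs{B_{\rb}(\vO)}$. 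Since $\rho+(1-\rho)=1$, adding the two contributions gives
\begin{equation*}
  \km(\vz) = \mum\,\frac{\abs{B_{\rb}(\vz)\cap\Omega}}{\abs{B_{\rb}(\vO)}},
\end{equation*}
with the selection probability $\rho$ dropping out entirely.

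Given this formula, both claims are immediate. For~\eqref{eq:AltVRPlcmProb}, if the Euclidean distance $d(\vz,\partial\Omega)<\rb$ then a portion of the ball $B_{\rb}(\vz)$ lies outside $\Omega$, so $\abs{B_{\rb}(\vz)\cap\Omega}<\abs{B_{\rb}(\vO)}$ and hence $\km(\vz)<\mum$. For~\eqref{eq:AltVRPlcmProb_away_from_bd}, if $d(\vz,\partial\Omega)\geq\rb$ then every point within distance less than $\rb$ of $\vz$ lies in the interior of $\Omega$, so $B_{\rb}(\vz)\subseteq\Omega$, giving $\abs{B_{\rb}(\vz)\cap\Omega}=\abs{B_{\rb}(\vO)}$ and therefore $\km(\vz)=\mum$.

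I do not anticipate a substantive obstacle: the only point needing a little care is the containment $B_{\rb}(\vz)\subseteq\Omega$ in the second case, which follows directly from the definition of distance to the boundary for the open ball. The conceptually clarifying step is the collapse of the $\rho$-dependence, which shows that the effective unbinding rate of the alternative VR model depends on $\vz$ only through the fraction of the reaction-ball contained in the domain, exactly as for the standard Doi model, and so reduces to rejection of unbinding events placing a product outside $\Omega$.
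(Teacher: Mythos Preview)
Your proposal is correct and follows essentially the same approach as the paper: integrate out the $\delta$-factors in each of the two bracketed terms to obtain $\km(\vz)=\mum\,\abs{B_{\rb}(\vz)\cap\Omega}/\abs{B_{\rb}(\vO)}$ (with the $\rho$-dependence cancelling), and then conclude by the same geometric comparison of the ball with $\Omega$ depending on whether $d(\vz,\partial\Omega)$ is less than or at least $\rb$.
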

\begin{proof}
  For the unbinding kernel~\eqref{eq:bwdRxAltVR}, the probability per time a \textrm{C} molecule at $\vz$ will successfully produce \textrm{A} and \textrm{B} molecules each within $\Omega$ is given by
  \begin{equation}
    \begin{aligned}
      \km(\vz) &= \int_{\Omega^2}\km(\vx,\vy\vert\vz) \, d\vx \, d\vy,\\  
      &= \mum\brac{\rho\int_{\Omega}\frac{\ind_{B_{\rb}(\vz)}(\vy)}{\abs{B_{\rb}(\vO)}}\ind_{\Omega}\paren{\vz} \, d\vy + (1-\rho)\int_{\Omega}\frac{\ind_{B_{\rb}(\vz)}(\vx)}{\abs{B_{\rb}(\vO)}}\ind_{\Omega}\paren{\vz} \, d\vx},\\ 
      &= \mum\brac{\rho\frac{\abs{\Omega\cap B_{\rb}(\vz)}}{\abs{B_{\rb}(\vO)}} + (1-\rho)\frac{\abs{\Omega\cap B_{\rb}(\vz)}}{\abs{B_{\rb}(\vO)}}},\\
      &= \mum\frac{\abs{\Omega\cap B_{\rb}(\vz)}}{\abs{B_{\rb}(\vO)}}.
    \end{aligned}
    \label{eq:AltVRUnbindProbSuccess}
  \end{equation}
  Let $d(\vz,\partial\Omega)$ be the Euclidean distance from $\vz$ to $\partial\Omega$, and assume $d(\vz,\partial\Omega) < \rb$ so that 
  \begin{equation*}
    \abs{B_{\rb}(\vz)\cap\Omega} < \abs{B_{\rb}(\vO)}.
  \end{equation*}
  By \eqref{eq:AltVRUnbindProbSuccess} we immediately obtain
  \begin{equation*}
    \km(\vz) < \mum.
  \end{equation*}
  
  Finally, suppose $d(\vz,\partial\Omega)\geq\rb$ so that the ball of radius $\rb$ about $\vz$ is strictly contained in $\Omega$. In this case~\eqref{eq:AltVRUnbindProbSuccess} becomes
  \begin{equation}
    \km(\vz) = \mum\frac{\abs{B_{\rb}(\vz)}}{\abs{B_{\rb}(\vO)}} = \mum.
    \label{eq:AltVRUnbindSuccessXout}
  \end{equation}
\end{proof}

\section{Detailed Balance for the SCK CR Model}
\label{app:CRKd}
In the SCK CR model, the chosen reaction kernels are 
\begin{align}
    \kp(\vz \vert \vx, \vy) &= \alpha\sqrt{2}\delta_{\partial B_{\rb}(\vec{0})}(\vx - \vy)\delta(\vz - \gamma\vx - (1-\gamma)\vy), \label{eq:fwdRxSCK}\\
    \km(\vx,\vy \vert \vz) &= \mum\delta\paren{\vy - \frac{\vz-\gamma\vx}{1-\gamma}}\frac{\delta_{\partial B_{(1-\gamma)\rb}(\vO)}(\vx-\vz)}{\abs{\partial B_{(1-\gamma)\rb}(\vO)}}. \label{eq:bwdRxSCK}
\end{align}
To verify that these reaction terms satisfy detailed balance, we will show that \eqref{eq:detailedBalancekpkm} holds for all $(\vx, \vy) \in \OSqD$ and all $\vz \in \Omega$. Using \eqref{eq:detailedBalancekpkmints} the dissociation constant, $\Kd$, of the SCK CR model is given by
\begin{equation}
  \begin{aligned}
    \Kd &= \frac{\int_{\OSqD} \int_{\Omega} \mum\delta\paren{\vy - \frac{\vz-\gamma\vx}{1-\gamma}}\frac{\delta_{\partial B_{(1-\gamma)\rb}(\vO)}(\vx-\vz)}{\abs{\partial B_{(1-\gamma)\rb}(\vO)}} \, d\vz \, d\vx \, d\vy}
      {\int_{\OSqD} \int_{\Omega} \alpha\sqrt{2}\delta_{\partial B_{\rb}(\vec{0})}(\vx - \vy)\delta(\vz - \gamma\vx - (1-\gamma)\vy) \, d\vz \, d\vx \, d\vy}, \\
      &= \frac{\frac{\mum}{\abs{\partial B_{\rb}(\vO)}}\int_{\OSqD}\delta_{\partial B_{\rb}(\vO)}(\vx-\vy)\ind_{\Omega}\paren{\gamma\vx + (1-\gamma) \vy}\,d\vx \, d\vy}
      {\alpha\sqrt{2}\int_{\OSqD}\delta_{\partial B_{\rb}(\vO)}(\vx - \vy)\ind_{\Omega}(\gamma\vx + (1-\gamma)\vy) \, d\vx \, d\vy}, \\ 
      &= \frac{\mum}{\alpha\sqrt{2}\abs{\partial B_{\rb}(\vO)}}.
  \end{aligned}
  \label{eq:CR_Kd}
\end{equation}
Here we have used the scaling property that $\delta(\lambda\vx) = \delta(\vx)/\lambda^d$, the surface $\delta$-function scaling property~\eqref{eq:deltasurfmeasidentityAp2}, and that $\abs{\partial B_{(1-\gamma)\rb}(\vO)} = (1-\gamma)^{d-1}\abs{\partial B_{\rb}(\vO)}$. Reusing these properties again, with the proceeding choice for $\Kd$ we find
\begin{equation*}
  \begin{aligned}
    \Kd\kp(\vz \vert \vx,\vy) &= \frac{\mum}{\alpha\sqrt{2}\abs{\partial B_{\rb}(\vO)}}\alpha\sqrt{2}\delta_{\partial B_{\rb}(\vO)}(\vx - \vy)\delta(\vz - \gamma\vx - (1-\gamma)\vy), \\
    &= \mum\frac{\delta_{\partial B_{\rb}(\vO)}(\vx - \vy)}{\abs{\partial B_{\rb}(\vO)}}\delta(\vz - \gamma\vx - (1-\gamma)\vy), \\
    &= \mum\delta\paren{\vy - \frac{\vz-\gamma\vx}{1-\gamma}}\frac{\delta_{\partial B_{(1-\gamma)\rb}(\vO)}(\vx-\vz)}{\abs{\partial B_{(1-\gamma)\rb}(\vO)}}, \\
    &= \km(\vx,\vy \vert \vz).
  \end{aligned}
\end{equation*}

\section{Detailed Balance Leads to Rejection of Some Unbinding Events Near Boundaries for the SCK Model}
\label{sect:DBrejectionSCK}
\begin{theorem}
  \label{thm:plcmProbSCK}
  Consider the detailed balance preserving SCK unbinding kernel~\eqref{eq:mTwoPartDefCR}. Denote the unbinding rate constant by $\mum$. For a C molecule at $\vz$ in $\Omega$
  \begin{equation} 
    \int_{\Omega^2}\km(\vx,\vy\vert\vz) \, d\vx \, d\vy < \mum
    \label{eq:SCKPlcmProb}
  \end{equation}
  when $\vz$ is within $\max(\gamma,1-\gamma)\rb$ of $\partial\Omega$. Similarly,
  \begin{equation} 
    \int_{\Omega^2}\km(\vx,\vy\vert\vz) \, d\vx \, d\vy = \mum
    \label{eq:SCKPlcmProb_away_from_bd}
  \end{equation}
  if $\vz$ is at least $\max(\gamma,1-\gamma)\rb$ away from $\partial\Omega$.
\end{theorem}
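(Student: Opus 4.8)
The plan is to follow the proof of Theorem~\ref{thm:plcmProb} for the Doi VR model essentially verbatim, replacing the $d$-dimensional volume measure by the $(d-1)$-dimensional surface measure and each solid ball by its bounding sphere. As there, I would treat the two cases $\max(\gamma,1-\gamma)=1-\gamma$ and $\max(\gamma,1-\gamma)=\gamma$ separately, and since the second reduces to the first by interchanging $\gamma$ and $1-\gamma$, I describe only the first.

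First I would compute $\km(\vz)$ in closed form from the SCK unbinding kernel~\eqref{eq:mTwoPartDefCR}. Carrying out $\int_{\Omega^2}\km(\vx,\vy\vert\vz)\,d\vx\,d\vy$ first in $\vy$ against $\delta\paren{\vy-\tfrac{\vz-\gamma\vx}{1-\gamma}}$ produces the restriction $\ind_{\Omega}\paren{\tfrac{\vz-\gamma\vx}{1-\gamma}}$ forcing $\vy\in\Omega$. Translating the surface measure via $\delta_{\partial B_{(1-\gamma)\rb}(\vO)}(\vx-\vz)=\delta_{\partial B_{(1-\gamma)\rb}(\vz)}(\vx)$ and recalling the definition~\eqref{eq:omegahat} of $\hat{\Omega}_{\gamma}(\vz)$, the remaining $\vx$-integral collapses onto the sphere and yields
\[
  \km(\vz)=\frac{\mum}{\abs{\partial B_{(1-\gamma)\rb}(\vO)}}\,\abs{\partial B_{(1-\gamma)\rb}(\vz)\cap\hat{\Omega}_{\gamma}(\vz)},
\]
where $\abs{\cdot}$ now denotes surface measure. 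This is the direct SCK analogue of~\eqref{eq:DoiUnbindProbSuccess}.

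With this identity the two bounds follow geometrically. For the equality~\eqref{eq:SCKPlcmProb_away_from_bd}, assume $d(\vz,\partial\Omega)\ge(1-\gamma)\rb$, so $\partial B_{(1-\gamma)\rb}(\vz)\subset\Omega$; I would then reuse the reflection argument of Theorem~\ref{thm:plcmProb}, noting that for any $\vx\in\partial B_{(1-\gamma)\rb}(\vz)$ the reflected point $\vv=\tfrac{1}{1-\gamma}\paren{\vz-\gamma\vx}$ satisfies $\abs{\vv-\vz}=\gamma\rb\le(1-\gamma)\rb$, hence $\vv\in\Omega$. Thus $\vx\in\hat{\Omega}_{\gamma}(\vz)$, the whole sphere lies in $\hat{\Omega}_{\gamma}(\vz)$, and $\km(\vz)=\mum$. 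For the strict inequality~\eqref{eq:SCKPlcmProb}, assume $d(\vz,\partial\Omega)<(1-\gamma)\rb$; since $\hat{\Omega}_{\gamma}(\vz)\subseteq\Omega$, it suffices to show $\abs{\partial B_{(1-\gamma)\rb}(\vz)\cap\Omega}<\abs{\partial B_{(1-\gamma)\rb}(\vO)}$, i.e.\ that a portion of the sphere of positive surface measure lies outside $\Omega$.

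The one step requiring more care than the volume computation of Theorem~\ref{thm:plcmProb} is precisely this last claim: in the Doi case the part of the \emph{solid} ball outside $\Omega$ automatically carries positive volume, whereas here I must verify that the part of the \emph{sphere} outside $\Omega$ carries positive surface measure. I would establish this from $d(\vz,\partial\Omega)<(1-\gamma)\rb$ by choosing a nearest boundary point $\vb$, so $\abs{\vb-\vz}<(1-\gamma)\rb$; the point of $\partial B_{(1-\gamma)\rb}(\vz)$ in the direction of $\vb$ lies strictly beyond $\vb$ and hence in $\Omega^{\complement}$, and under the standing assumption that $\partial\Omega$ is (locally) Lipschitz, so that $\Omega$ lies locally on one side of its boundary, a relatively open neighborhood of that point on the sphere also lies in $\Omega^{\complement}$, giving positive surface measure and the strict inequality. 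Finally, the case $\max(\gamma,1-\gamma)=\gamma$ follows by the identical argument after rewriting~\eqref{eq:mTwoPartDefCR} with the surface-measure scaling identity so as to sample the particle on the sphere of radius $\gamma\rb$ about $\vz$, interchanging the roles of $\gamma$ and $1-\gamma$ throughout.
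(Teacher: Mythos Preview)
Your proposal is correct and follows essentially the same approach as the paper's proof: the same case split on $\max(\gamma,1-\gamma)$, the same closed-form expression $\km(\vz)=\mum\abs{\partial B_{(1-\gamma)\rb}(\vz)\cap\hat{\Omega}_{\gamma}(\vz)}/\abs{\partial B_{(1-\gamma)\rb}(\vO)}$, the same reflection argument for the equality case, and the same reduction of the second case by rewriting the kernel with the surface-measure scaling identity. The only difference is that you supply extra justification (via a nearest boundary point and local one-sidedness of $\Omega$) for why a portion of the sphere of positive surface measure lies outside $\Omega$ when $d(\vz,\partial\Omega)<(1-\gamma)\rb$, whereas the paper simply asserts $\abs{\partial B_{(1-\gamma)\rb}(\vz)\cap\Omega}<\abs{\partial B_{(1-\gamma)\rb}(\vO)}$ without further comment.
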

\begin{proof}
  Given the unbinding rate $\mum$, the unbinding mechanism for the SCK CR model that satisfies detailed balance is given by \eqref{eq:mTwoPartDefCR}. We will again consider two cases: $\max(\gamma,1-\gamma) = 1-\gamma$ and $\max(\gamma,1-\gamma) = \gamma$. First assume $\max(\gamma,1-\gamma) = 1-\gamma$.
  
  The probability per time a \textrm{C} molecule at $\vz$ will successfully produce \textrm{A} and \textrm{B} molecules each within $\Omega$ is given by
  \begin{equation}
    \begin{aligned}
      \km(\vz) &= \int_{\Omega^2}\km(\vx,\vy\vert\vz) \, d\vx \, d\vy,\\  
      &= \mum\int_{\Omega}\frac{\delta_{\partial B_{(1-\gamma)\rb}(\vz)}(\vx)}{\abs{\partial B_{(1-\gamma)\rb}(\vO)}}\ind_{\Omega}\paren{\frac{\vz-\gamma\vx}{1-\gamma}} \, d\vx,\\ 
      &= \mum\frac{\abs{\partial B_{(1-\gamma)\rb}(\vz)\cap\hat{\Omega}_{\gamma}(\vz)}}{\abs{\partial B_{(1-\gamma)\rb}(\vO)}},
    \end{aligned}
    \label{eq:SCKUnbindProbSuccess}
  \end{equation}
  where the $\hat{\Omega}_{\gamma}(\vz)$ denotes the translated and dilated set given by \eqref{eq:omegahat}. Let $d(\vz,\partial\Omega)$ be the Euclidean distance from $\vz$ to $\partial\Omega$, and assume $d(\vz,\partial\Omega) < (1-\gamma)\rb$ so that 
  \begin{equation*}
    \abs{\partial B_{(1-\gamma)\rb}(\vz)\cap\Omega} < \abs{\partial B_{(1-\gamma)\rb}(\vO)}.
  \end{equation*}
  By \eqref{eq:SCKUnbindProbSuccess} this immediately gives
  \begin{equation*}
    \km(\vz) < \mum.
  \end{equation*}

  Now consider $\vz$ with $d(\vz,\partial\Omega)\geq(1-\gamma)\rb$ so that the ball of radius $(1-\gamma)\rb$ about $\vz$ is strictly contained in $\Omega$. Let $\vw \in \partial B_{(1-\gamma)\rb}(\vz) \subset \Omega$. We claim $\vw \in \hat{\Omega}_{\gamma}(\vz)$. Take
  \begin{equation*}
    \vv = \tfrac{1}{1 - \gamma} (\vz - \gamma \vw)
  \end{equation*}
  so that 
  \begin{equation*}
    \abs{\vv - \vz} = \frac{\gamma}{1-\gamma}\abs{\vz-\vw} = \gamma\rb.
  \end{equation*}
  This implies that $\vv \in \partial B_{\gamma \rb}(\vz)$, and therefore in $\Omega$. As such, $\vw \in \hat{\Omega}_{\gamma}(\vz)$ implying $\partial B_{(1-\gamma)\rb}(\vz) \subset \hat{\Omega}_{\gamma}(\vz)$. By~\eqref{eq:SCKUnbindProbSuccess}
  \begin{equation*}
    \km(\vz) = \mum\frac{\abs{\partial B_{(1-\gamma)\rb}(\vz)}}{\abs{\partial B_{(1-\gamma)\rb}(\vO)}} = \mum.
  \end{equation*}
  
Finally, using scaling properties of the $\delta$-function, we can equivalently write
\begin{align*}
  \km(\vx,\vy\vert\vz) &= \mum \tfrac{(1-\gamma)^d}{\gamma^d} \delta \paren{\vx - \frac{\vz - (1-\gamma)\vy}{\gamma}}
  \frac{\delta_{\partial B_{(1-\gamma)\rb}(\vO)}\paren{\tfrac{1-\gamma}{\gamma}(\vy-\vz)}}{\abs{\partial B_{(1-\gamma)\rb}(\vO)}} \\
  &= \mum \delta \paren{\vx - \frac{\vz - (1-\gamma)\vy}{\gamma}} \frac{\delta_{\partial B_{\gamma\rb}(\vO)}\paren{\vy-\vz}}{\abs{\partial B_{\gamma\rb}(\vO)}}. 
\end{align*}
Using the preceding formula, and integrating~\eqref{eq:SCKUnbindProbSuccess} first with respect to $\vx$, we obtain
  \begin{equation*}
    \begin{aligned}
      \km(\vz) &= \mum\int_{\Omega}\frac{\delta_{\partial B_{\gamma\rb}(\vz)}(\vy)}{\abs{\partial B_{\gamma\rb}(\vO)}}\ind_{\Omega}\paren{\frac{\vz-(1-\gamma)\vy}{\gamma}} \, d\vy,\\ 
      &= \mum\frac{\abs{\partial B_{\gamma\rb}(\vz)\cap\hat{\Omega}_{1-\gamma}(\vz)}}{\abs{\partial B_{\gamma\rb}(\vO)}}.
    \end{aligned}
  \end{equation*}
A similar argument to above interchanging $\gamma$ and $1-\gamma$ then gives the result when $\max(\gamma,1-\gamma) = \gamma$.
\end{proof}

\section{Equilibrium State for a Well-Mixed Stochastic Chemical Kinetics Model}
\label{sect:SSProbKd}
We consider a well-mixed stochastic chemical kinetics model for the reversible reaction
\begin{equation*}
  \textrm{A} + \textrm{B} \xrightleftharpoons[k_{-}]{k_{+}} \textrm{C},
\end{equation*}
based on the Chemical Master Equation~\cite{GardinerHANDBOOKSTOCH}. Here $k_{+}$ denotes the well-mixed association rate in units of volume per time, while $k_{-}$ is the dissociation rate in units of per time. Let $P(t)$ denote the probability the system is in the unbound state at time $t$, with $P_{\text{b}}(t)$ the probability the system is in the bound state. The well-mixed Chemical Master Equation model for the two-particle reversible reaction is
\begin{subequations} 
\label{eq:WellMixedModel}
  \begin{align} 
    \D{P}{t}(t) &= -\frac{k_+}{\abs{\Omega}} P(t) + k_- P_{\text{b}}(t), \label{eq:WellMixedP} \\    
    \D{P_{\text{b}}}{t}(t) &= -k_- P_{\text{b}}(t) + \frac{k_+}{\abs{\Omega}} P(t). \label{eq:WellMixedPb}
  \end{align}
\end{subequations}
Together with the normalization $P + P_{\text{b}} = 1$ we can eliminate \eqref{eq:WellMixedPb} and rewrite \eqref{eq:WellMixedP} as
\begin{equation*}
  \D{P}{t}(t) = -\frac{k_+}{\abs{\Omega}} P(t) + k_- (1 - P(t)),
\end{equation*}
which gives the steady-state solution as
\begin{equation}
  \equil{P} = \frac{k_-}{k_- + (k_+/\abs{\Omega})} = \frac{\Kd\abs{\Omega}}{1+\Kd\abs{\Omega}},
  \label{eq:WMSSP}
\end{equation}
where the dissociation constant, $\Kd$, is defined as
\begin{equation*}
  \Kd = \frac{k_-}{k_+}.
\end{equation*}
The steady-state solution to \eqref{eq:WellMixedPb} is similarly given by
\begin{equation}
  \equil{P}_{\text{b}} = 1-\equil{P} = \frac{1}{1+\Kd\abs{\Omega}}.
  \label{eq:WMSSPb}
\end{equation}

Provided $\Kd$ is chosen the same in the well-mixed and particle models, we note that~\eqref{eq:WMSSP} and~\eqref{eq:WMSSPb} give the same steady-state probabilities as we find for the particle model, see~\eqref{eq:ssprobs}.

\section{Brownian Dynamics Simulations} \label{sect:numericsappendix}
\begin{figure*}[tbph]
  \begin{algorithm}[H] 
    \caption{Brownian Dynamics (BD) method used in Section~\ref{sect:numerics}}
    \label{alg:BD}
    \begin{algorithmic}[1]
      \State{Input domain, $\Omega \subset \R^3$; diffusivity, $D$; on rate, $\lambda$; off rate, $\mum$; reaction radius, $\rb$; initial concentration of \textrm{C} molecules, $\brac{C_0}$; number of timesteps, $N_t$; and timestep, $\Delta t$.}
      \State{In the following \texttt{randn(3)} denotes a vector of three independent samples from the normal distribution with mean zero and variance one.}
      \State{Uniformly distribute $N_C = \abs{\Omega} \brac{C_0}$ molecules throughout $\Omega$. Set the number of \textrm{A} and \textrm{B} molecules to zero, $N_A=N_B=0$.}
      \State{Denote by $\vX_{i,S}(t)$ the position of the $i$th molecule of species $S \in \{A,B,C\}$.}
      \For{n in $1$ to $N_t$}
        \For{$S$ in $\{A,B,C\}$ and $i$ in $1$ to $N_S$}
          \State{$\vX_{i,S}(t+\Delta t) := \vX_{i,S}(t) + \sqrt{2 D \Delta t}$ \texttt{randn(3)}}
          \State{If $X_{i,s}(t + \Delta t) \not\in \Omega$, use normal reflection~\cite{SchussStochProcBook2010} to reflect it back into the domain.}
        \EndFor
        \For{all pairs $(\vX_{i,A},\vX_{j,B})$ where both molecules still exist}
          \State{If $\norm{\vX_{i,A} - \vX_{j,B}} < \rb$, react with probability $\lambda \Delta t$.}
          \If{reaction occurs}
          \State{Place a new \textrm{C} molecule at $\tfrac{1}{2}(\vX_{i,A}+\vX_{j,B})$ (i.e. $\gamma = \tfrac{1}{2}$ in~\ref{eq:bwdRxDoi}).}
          \EndIf
        \EndFor
        \For{all \textrm{C} molecules $\vX_{i,C}$}
          \State{Dissociate the molecule with probability $\mum \Delta t$.}
          \If{reaction occurs}
          \State{Place the \textrm{A} molecule uniformly in the sphere of radius $(1-\gamma)\rb$ about $\vX_{i,C}$.}
          \State{Place the \textrm{B} molecule such that $\vX_{i,C}$ is the midpoint between the \textrm{A} and \textrm{B} molecules.}
          \State{If any reaction product is outside $\Omega$ either:}
          \State{\hspace{30pt} (a) reject the reaction (detailed balance method).} 
          \State{\hspace{30pt} (b) reflect the product(s) back into the domain (reflection method).}
          \EndIf
        \EndFor
      \EndFor
    \end{algorithmic}
  \end{algorithm}
  \end{figure*}  
The Brownian Dynamics method used in Section~\ref{sect:numerics} is summarized in Algorithm~\ref{alg:BD} and based on the small timestep BD method for the Doi VR model described in~\cite{ErbanChapman2009}. It uses a standard first order Lie-Trotter splitting in time to first diffuse all particles over one timestep, then sample bimolecular reactions over one timestep based on particle positions after the diffusion step, and finally sample dissociation reactions over one timestep based on particle positions and numbers after the bimolecular reaction timestep. 

For $L = 20 \text{ nm}$ and $\Omega = (0,L)^3$ we used $\lambda = 40.5745 \text{ s}^{-1}$, giving a steady-state probability to be in the \textrm{C} state of $\bar{P}_{\text{b}} = .5512$ for a system with one initial \textrm{C} particle. Values of $\lambda$ for other domain sizes were calculated by solving the steady-state well-mixed chemical master equation~\cite{McQuarrieJAppProb} for the number of \textrm{C} particles in the system numerically, and using Matlab's \texttt{fzero} command to optimize these solutions to recover the desired steady-state fraction $\lim_{t \to \infty} f(t)=.551$, see~\eqref{eq:Cfrac}, to three digits of accuracy. The association rate in the well-mixed chemical master equation was taken to be $\lambda \tfrac{4}{3} \pi \rb^3$ and the dissociation rate was $\mum$, see Table~\ref{tab:tab1} for numerical values of $\mum$ and $\rb$. For a cubic domain with $L = 100 \text{ nm}$ we found $\lambda = 89.7914 \text{ s}^{-1}$, and for a cubic domain with $L = 200 \text{ nm}$ we found $\lambda = 90.3269519 \text{ s}^{-1}$. For a domain of dimensions $L \times L \times 30 \text{nm}$ with $L = 200 \text{ nm}$ we found $\lambda = 89.8725 \text{ s}^{-1}$.  

All BD simulations used a timestep of $\Delta t = 10^{-8}\textrm{ s}$. With $D=1 (\mu \textrm{m})^2 / \textrm{s}$ as in our simulations, this timestep gives a standard deviation in spatial displacements due to diffusion per timestep of $\sqrt{6 D \Delta t} \approx .24 \textrm{ nm}$, substantially smaller than the reaction-radius of $10 \textrm{ nm}$. With the dissociation rate of $17.3 \textrm{ s}^{-1}$, it gives a probability per timestep that a \textrm{C} particle dissociates of $\mum \Delta t = 1.73 \times 10^{-7}$. For a pair of \textrm{A} and \textrm{B} particles within $\rb$ it gives a probability per timestep of reacting of $\lambda \Delta t \in \brac{4 \times 10^{-7},9.1 \times 10^{-7}}$ as $\lambda$ was varied between domain sizes.

\clearpage 

\bibliographystyle{aipnum4-1.bst}
\bibliography{lib.bib}

\end{document}